\newcommand\cmd[1]{\Comment{\textcolor{blue}{#1}}}
\newtheorem{definition}{Definition}
\newtheorem{theorem}{Theorem}
\newtheorem{lemma}{Lemma}
\def\BibTeX{{\rm B\kern-.05em{\sc i\kern-.025em b}\kern-.08em
    T\kern-.1667em\lower.7ex\hbox{E}\kern-.125emX}}
\begin{document}

\title{Leopard: Towards High Throughput-Preserving BFT for Large-scale Systems}

\author{\IEEEauthorblockN{Kexin Hu\textsuperscript{*}, Kaiwen Guo\textsuperscript{$\dagger$}, Qiang Tang\textsuperscript{$\ddagger$}, Zhenfeng Zhang\textsuperscript{*}, Hao Cheng\textsuperscript{$\dagger$} and Zhiyang Zhao\textsuperscript{$\dagger$}}
\textit{\textsuperscript{*}Institute of Software, Chinese Academy of Sciences. Email: \{hukexin, zhenfeng\}@iscas.ac.cn} \\
\textit{\textsuperscript{$\dagger$}University of Chinese Academy of Sciences. Email: \{kaiwen2016, chenghao2020, zhiyang2018\}@iscas.ac.cn} \\
\textit{\textsuperscript{$\ddagger$}The University of Sydney. Email: qiang.tang@sydney.edu.au}
	}


\maketitle

\begin{abstract}
With the emergence of large-scale decentralized applications, a scalable and efficient Byzantine Fault Tolerant (BFT) protocol of hundreds of replicas is desirable. Although the throughput of existing leader-based BFT protocols has reached a high level of $10^5$ requests per second for a small scale of replicas, it drops significantly when the number of replicas increases. 

This paper focuses on preserving high throughput as the BFT protocol's scale is increasing. 
We identify and analyze a major bottleneck to leader-based BFT protocols due to the excessive workload of the leader at large scales.  
A new metric of \textit{scaling factor} is defined to capture whether a BFT protocol will get stuck when the scale gets larger, which can be used to measure the performance of throughput and scalability of BFT protocols. 
We propose ``Leopard'', the first leader-based BFT protocol that scales to multiple hundreds of replicas, and more importantly, preserves high throughput. 
We remove the bottleneck by introducing a technique of achieving the ideal \textit{constant} scaling factor, which takes full advantage of the idle resource and balances the workload of the leader among all replicas. 
We implemented Leopard and evaluated its performance compared to HotStuff, a state-of-the-art leader-based BFT protocol. We ran extensive experiments 
with up to 600 replicas. The results show that Leopard achieves significant throughput improvements. In particular, the throughput of Leopard remains at a high level of $10^5$ when the scale is 600. It achieves a $5\times$ throughput over HotStuff when the scale is 300, 
and the gap becomes wider as the scale further increases.
\end{abstract}

\begin{IEEEkeywords}
BFT, high throughput, scalability, partially synchronous
\end{IEEEkeywords}


\section{Introduction}
\label{section:intro}
Byzantine fault tolerance (BFT) state machine replication protocols aim to enable a set of replicas to reach consensus on a sequence of pending requests, and endure arbitrary failures (a.k.a. \textit{Byzantine faults}) from a subset of these replicas. As a fundamental primitive in distributed computing, they have been studied for decades in many classical works \cite{PSL80reaching,Schneider90}. 
Two basic security properties are \textit{safety} that honest replicas' outputs are consistent, and \textit{liveness} that honest inputs will be output by honest replicas.

BFT protocols have received a lot attentions showing substantial progresses both in security 
and efficiency. 
In particular, the optimal resilience bound (1/3) of Byzantine fault while ensuring the security is proved in the partially synchronous network model, where a known bound of message transmission holds after some unknown global stabilization time (GST) \cite{DLS88}. 
Traditional BFT solutions that guarantee the security with the optimal resilience normally support for a small scale of replicas (e.g., a dozen replicas) with moderate throughput (e.g., a few thousand requests per second) \cite{PBFT99}. 

Since the emergence of decentralized applications on the Internet started from Bitcoin \cite{Nak08bitcoin}, even consortium blockchain may have a large number of peers (e.g., in a global supply-chain); moreover, BFT protocols are often adopted as an important component for the permissionless consensus (e.g., \cite{OmniLedger18}) and  proof-of-stake protocols (e.g., Algorand \cite{Algorand17}), in which a large number of committee members (thousands) are chosen to run a BFT protocol. It follows that BFT protocols that can support a large number of replicas have become highly demanded. 
A practical BFT protocol should thus perform well on efficiency and scalability simultaneously, i.e., preserving high throughput when the number of replicas increases\footnote{Algorand \cite{Algorand17} provides a ``scalable'' BFT solution that supports thousands of nodes. However, the throughput is only about $1000$, while we would like to have both scalability and high throughput.}. 

Great efforts have been made to improve scalability and efficiency. 
While some of them achieving a better efficiency at the cost of weakening the resilience \cite{FastB05}, there are major improvements recently have realized a \textit{linear} communication complexity without sacrificing the resilience, by combining several optimization techniques both in theory and implementation \cite{SBFT19,HotStuff-imple19,BFTSmart}. Evaluations show that high throughput of confirming $10^{5}$ requests per second is achieved \cite{HotStuff-imple19}.


\begin{figure}[!h]
	\vspace{-3mm}
	\centering{\includegraphics[width=0.35\textwidth,trim=0 0 0 0,clip]{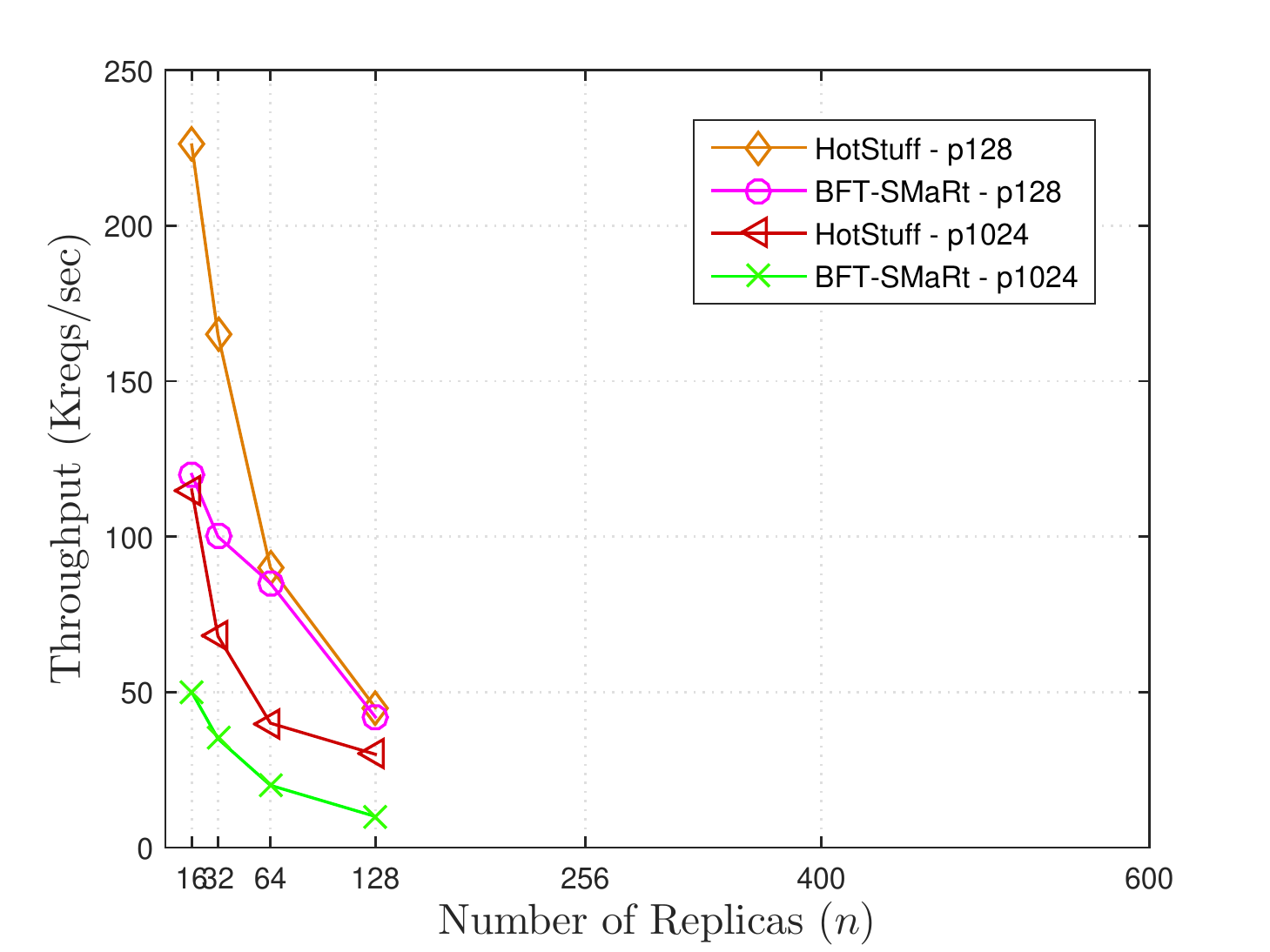}}
	\vspace{-2mm}
	\centering \caption{Evaluation result \cite{HotStuff-imple19} on scalability and throughput in HotStuff \cite{HotStuff-imple19} and BFT-SMaRt \cite{BFTSmart}, with 128-byte (p-128) and 1024-byte (p-1024) payloads.} 
	\label{fig:tpsothers}
	\vspace{-2mm}
\end{figure}

Unfortunately, the latest evaluation results (Fig.\ref{fig:tpsothers}) of the state-of-the-art protocols, HotStuff \cite{HotStuff-imple19} and BFT-SMaRt \cite{BFTSmart} (an optimized implementation based on PBFT \cite{PBFT99}) 
show that, high throughput appears only on a small scale of replicas; when the scale increases, the throughput drops sharply, even in the optimistic case that replicas follow the protocol and the network is synchronous! 


There are two common methods for improving scalability and efficiency in a distributed system, namely scaling out and scaling up. 
One popular approach to design a ``scale-out" BFT system is via sharding \cite{OmniLedger18}. By partitioning replicas into multiple shards that execute a BFT protocol to handle requests in parallel, the processing capacity scales horizontally with the number of replicas. 
To ensure a small probability that any shard has more than 1/3 fraction of Byzantine replicas, it still requires the existence of BFT protocols supporting a large scale (multiple hundreds or even thousands) of replicas for each shard, meanwhile preserving high throughput \cite{OmniLedger18}.
On the other hand, scaling up is to improve efficiency by adding more resources to each replica. Since resources are usually costly \cite{BandwidthMicrosoft}, scaling up a system is practical only if the performance improvement is significant. However, we find (see below in $\S$\ref{exp:costeffectiveness}) that the  throughput increase via scaling-up in state-of-the-art protocols approaches 0 when the scale (number of replicas) gets larger. 

These observations motivate us to consider the following main question of the paper: 
\begin{center}
{\em Can we design a partially synchronous BFT protocol that is scalable (to at least several hundred or more) under optimal resilience; more importantly, it preserves high throughput thus enabling an effective scaling up?}
\end{center}





\subsection{Our Contributions}
Most of the efficient BFT implementations, such as HotStuff and BFT-SMaRt, are leader-based constructions. 
In this paper, we also focus on scalable and efficient leader-based (partially synchronous) BFT protocols. Our contributions are three-fold:

\smallskip
\noindent\textbf{Identifying the major bottleneck.} 
Most leader-based BFT protocols follow the design paradigm of the seminal work of PBFT: the leader first \textit{proposes} a consensus proposal (e.g., a new block), containing a bunch of pending requests to other replicas, for initiating a new agreement instance; each replica then \textit{votes} on this proposal by multicasting several rounds of votes. 
Since the well-known drawback of PBFT is the heavy communication cost which is mostly due to the all-to-all communication pattern during voting, several works \cite{SBFT19,HotStuff-imple19} improves this by using cryptographic techniques to aggregate votes of size $O(n)$ into one proof of size $O(1)$ in every round of voting. Combining with the pipelining technique \cite{Casper17}, where the second round of voting on a consensus proposal is piggybacked on the next proposal's first round of voting, the amortized communication cost can be further reduced.

While those protocols mainly focus on reducing the cost of the protocol overall, one crucial but often neglected aspect is that the \textit{leader} might be overloaded. 
To validate this simple observation experimentally, we measured the leader's workload\footnote{We use the leader's bandwidth utilization as an example to show its workload since the bandwidth utilization is a critical metric and it can be easily measured.} 
with an increasing number of replicas. The result in Fig. \ref{Fig:HotStuff} shows that the leader's workload (expressed by the bandwidth utilization) increases significantly as the protocol's scale gets larger, thus at the same time, the throughput decreases dramatically. 

\begin{figure}[ht]
	\vspace{-2mm}
	\centering		
	\includegraphics[width=0.35\textwidth,trim=0 0 0 0,clip]{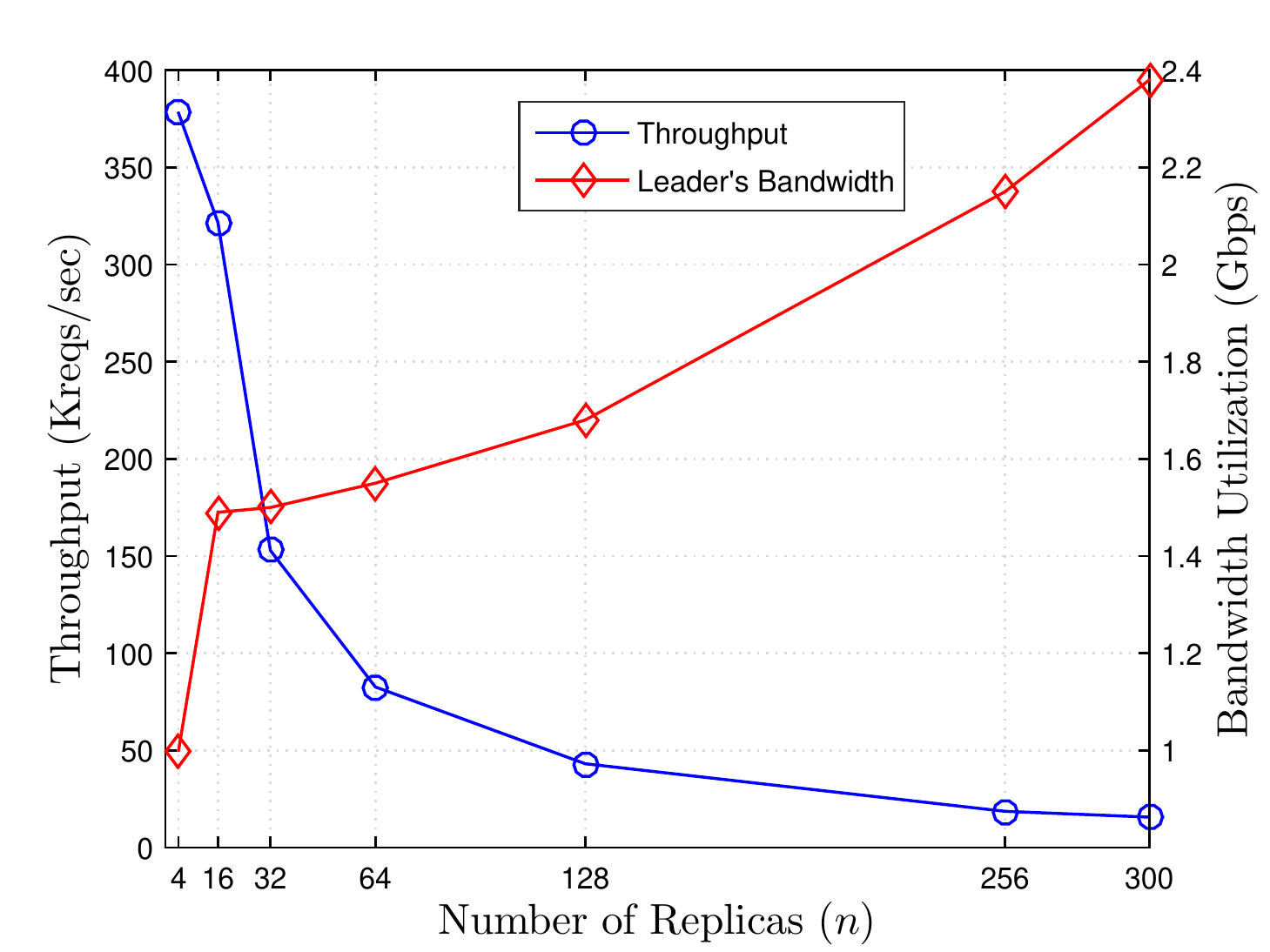}	
	\vspace{-2mm}
	\caption{Throughput and leader's bandwidth utilization in HotStuff with 128-byte payload.} 
	\vspace{-2mm}
	\label{Fig:HotStuff}
\end{figure}

Recall that the leader serves as disseminating every pending request via consensus proposals to all the other $n-1$ replicas, and each non-leader replica only receives requests from the leader. Let $\Lambda$ be throughput (the number of requests to be confirmed per second) that we want to achieve, and payload size be the number of bits per request. 
When processing $\Lambda$ requests, the leader's communication cost \textit{at the propose step} during request dissemination\footnote{A conventional optimization with engineering was running the consensus on the digest of requests to reduce the cost. But the complete requests are still required to be delivered to all replicas (otherwise, it is the equivalent of confirming requests with a lower payload size). The increase of leader's workload with an increasing $n$, however, still remains even using the optimization, and it cannot help to achieve a constant scaling factor.} is
\begin{equation}
			\Lambda\times\textrm{payload size}\times(n-1),
	\label{eq:leader-workload}
\end{equation}
(whereas a non-leader replica's communication is $\Lambda\times\textrm{payload size}$). 
Since the maximal achievable throughput is subject to the constraint of the number of bits that can be transmitted per second at each replica, denoted as $C$, throughput shall certainly drop when $n$ grows. 


To preserve high throughput when the protocol's scale increases, no replica's communication cost should become a bottleneck. To capture this, we define a performance metric called \textit{scaling factor} to be the heaviest communication cost among all replicas for {\em confirming one bit} of request following the protocol, i.e. 
$$SF=\max_{1\leq i\leq n}\{c_i\}$$ where $c_i$ denotes the communication cost of replica $i$.

We can see that a protocol's expected throughput is limited by $C/SF$.
%
Now let us examine the leader-based BFT protocols having leader cost as \eqref{eq:leader-workload} in the above. Let $c_L$ (resp. $c_R$) denote the communication cost for confirming a bit of request at the leader (resp. a non-leader replica). Clearly now {$SF^\textrm{}=\max\{c_L,c_R\}=O(n)$,} 
it follows that the expected  throughput decreases as $n$ grows (when each replica's capacity is fixed).

To preserve throughput when the protocol's scale increases, it would be desirable to design a protocol that has scaling factor increases slower (e.g. sub-linear to $n$), and ideally, just a small constant that is  irrelevant to $n$!

In addition, let us also examine the effectiveness of the scaling up solution by adding more resources, e.g., bandwidth. 
Since the number of bits that can be confirmed per second is bounded by $C$, we assume the fraction of $C$ that contributes to throughput is $\gamma$, 
where the exact value of $\gamma$ depends on the concrete protocol design but $\gamma\leq1$ always holds. 
When scaling up a system, i.e., increasing $C$, and to make the best use of the added resources, $\gamma$ should be a large value, and in the ideal case, approaching 1. 
However, it can be deduced from \eqref{eq:leader-workload} that, the $\gamma$ is at most $\frac{1}{n-1}$ in these leader-based BFT protocols. 
And it quickly approaches 0 as $n$ increases. This means that scaling up becomes ineffective for these BFT systems to improve throughput in a large-scale environment, and it leaves considerable room for improvement. 



\smallskip
\noindent
\textbf{A high throughput-preserving BFT protocol.} We present a new leader-based BFT design paradigm, and we instantiate a BFT protocol called ``Leopard'' in the partially synchronous network model. 
Our protocol addresses the main question of the paper.  
It preserves high throughput when the protocol's scale increases, while ensuring security without relaxing the optimal resilience. Notably, it achieves the desirable constant scaling factor. 
As a byproduct, Leopard also enables an effective scaling up with a constant $\frac{1}{2}\times$ improvement on throughput for all scales, when doubling the communication capacity at each replica. 
We further use the standard technique of threshold signatures for aggregating votes to reduce the communication cost. 
{As many other BFT protocols, Leopard is designed based on PBFT, by virtue of its optimal good-case round number for each decision \cite{good-case-latency21}}, numerous analyses on its security and implementations in different scenarios.  
{However, our techniques could also be applied to other advanced alternatives like HotStuff.} Let us briefly walk through the ideas. 

\smallskip
\noindent
\textit{\underline{Balancing the workload.}} 
The excessively unbalanced workloads between the leader and a non-leader replica in most leader-based protocols make the leader be the ``Achilles heel'' when the protocol scale is increasing.   
We distribute the leader's workload by taking full advantage of the idle resource of other replicas and balancing the workload among all replicas. 
To this end, we decouple a BFT consensus proposal into two planes: \textit{pending requests} and \textit{indices; each of the index points a package of pending requests},  
Every non-leader replica packs pending requests in its buffer and disseminates the packages, and the leader only has to contain links to packages in consensus proposals.

By spreading the $O(n)$ overhead for request dissemination to $O(n)$ replicas, every replica's amortized workload can remain constant (even if the protocol's scale becomes larger). 
Moreover, the decoupling of BFT consensus proposals essentially divides, from the functionality point of view, the propose step in a leader-based BFT protocol into: \textit{pending request dissemination} and \textit{new agreement instance initiation}. This enables a separation of the heavy part of request dissemination from the mainstream of consensus. Therefore, the consensus process and data delivery process can be thus executed in parallel, rather than in a sequential way as in prior. This further benefits to taking full advantage of the idle resources of replicas, and helps to improve the efficiency of the protocol.

Let $\alpha$ be the number of bits per package, $\beta$ be the size of an index, and $\kappa$ be the size per vote,   
and the scaling factor of Leopard is,  
\begin{equation*}
		SF^\textrm{Leopard}= \max\{\frac{(\beta+4\kappa)(n-1)}{\alpha}+1,\frac{(\beta+4\kappa)}{\alpha}+2\}.
\end{equation*}
Note that, $\alpha$ is a parameter of the protocol satisfying $\alpha\geq1$. When the scale of the protocol increases, we can adaptively adjust $\alpha$ with a larger value to counteract the effect by increasing $n$. In principle, this can bring a constant $SF$, which is the ideal result of this metric. (In practice, we may fine-tune $\alpha$ to better accommodate both a small $SF$, thus close to stably high throughput, and low latency, see $\S$\ref{section:experiment}). 
The new design paradigm not only provides an advantage in large-scale environments, but also makes it effective to further improve the efficiency of the system using the common method of scaling up. It enables an effective scaling up with $\gamma$ approximately to $\frac{1}{2}$ \textit{for all scales of the protocol}. This is a significant improvement and is substantially higher than the value $\gamma$ (approaches 0 as $n$ grows) in prior. 


\smallskip
\noindent
\textit{\underline{Further improving the protocol design.}} 
Amortizing the workload on disseminating requests to all replicas enables us to remove the bottleneck at the leader. 
It comes, however, at the risk of losing liveness in the presence of Byzantine faults: a faulty replica may always send its packages to a small subset of replicas and ignore others, thus  not enough votes can be collected to complete any confirmation. In the approach that using the leader to disseminate requests, liveness can be guaranteed by being equipped with a recovery mechanism (i.e., the view-change) to deal with a faulty leader. However, this is hard to recover from a faulty non-leader replica.
\footnote{The view-change mechanism monitors the progress of request confirmation, and switches faulty leaders if the progress stops. In the case that non-leader replicas disseminate requests, the stop of progress may be caused by faulty non-leader replicas rather than the leader. The switch of leaders cannot help to regain liveness in this case.}

We thus need an extra mechanism to restore  liveness. 
A trivial idea is to let replicas retrieve missing packages simply by asking the leader. 
However, this solution could eliminate the benefits brought by the workload balancing method, if a large number of packages should be re-sent by the leader. 
We instead let the replica retrieve from a committee of other replicas. The challenge is to ensure the existence of a committee that can indeed help (in the presence of Byzantine faults), 
meanwhile, no replica becomes a new bottleneck when $n$ is getting larger. We overcome this by adding one round of voting before initiating a new agreement instance, to ensure that a subset of replicas has received each outstanding package. The technique of erasure codes \cite{ECCtheory83} is also adopted. We show that a missing package will always be retrieved under an honest leader, and the amortized cost per replica still remains constant to $n$. 

\smallskip
\noindent\textbf{Experimental evaluations.} 
We implemented Leopard and conducted various experiments to evaluate its performance with up to 600 replicas. For a fair comparison, we deployed both Leopard and HotStuff back-to-back in the same environment. Significant and interesting results have been shown up, and the main results are as follows:

\begin{itemize}
	\item Leopard enables $5\times$ better throughput compared to HotStuff when $n$ is 300. The gap becomes wider with an even larger protocol scale. More importantly, the throughput of Leopard remains almost stable and achieves $10^5$ requests per second as the scale varies to 600. 
	\item Our evaluation shows that the throughput grows linearly with available bandwidth in both systems. However, under configurations with 20-200Mbps per-replica bandwidth, for example, Leopard achieves about 10-100Mbps throughput for all tested scales. Such throughput's growth is about 300\% of the throughput's growth in HotStuff when $n$ is 16 and 2000\% when $n$ is 128. 
	\item We evaluated the bandwidth utilization breakdown of our Leopard implementation across different components, and observed that, over 96\% bandwidth is utilized to disseminate requests. This implies that measuring the communication cost only on the vote phase is not enough for evaluating the efficiency of the protocol in the high throughput setting. 
\end{itemize}

We further carried out miscellaneous experiments, including the effect on varying batch sizes, leader's bandwidth utilization, the time spent at different components of the protocol, as well as the communication and time costs in the face of failures. The results obtained from our evaluations are inspiring (see $\S$\ref{section:experiment}), and let us have a better understanding of how Leopard behaves in many facets. It also motivates a further improvement on both the protocol and the implementation.

\section{Related Work}
\label{section:compare}

Reaching consensus in the face of Byzantine failures was first introduced by Lamport et al. in the 1980s \cite{PSL80reaching}. Early BFT protocols suffer from extremely high communication overhead \cite{DLS88}. The first practical BFT protocol in history is the seminal work of PBFT \cite{PBFT99}, and numerous improvements on leader-based BFT have been proposed since the invention of PBFT. 

In the past several years, cryptography, especially threshold signature schemes \cite{BLS01}, and new communication patterns are used to aggregate votes, allowing a single message acknowledgment rather than $O(n)$ messages for each decision. This effectively reduces the communication complexity by one order of $n$. 
Experimental evaluation on SBFT \cite{SBFT19} that adopts this method has shown that it is beneficial to a higher throughput. 
Besides, SBFT also introduces a fast path to reduce the voting round to one, in the case that no Byzantine replica exists. 

The chain-based idea, inspired by Bitcoin \cite{Nak08bitcoin}, has risen recently, in which each consensus proposal (e.g., a block) contains a hash link to its parent. By virtue of that, the confirmation of one proposal also means the confirmation of all its preceding proposals. When further combined with the elegant pipelining technique \cite{Casper17}, 
the total communication cost can be amortized and the average voting round can also be reduced to one. 
Experimental evaluation on HotStuff \cite{HotStuff-imple19}, which combines all the above ideas of aggregating votes and the chain-based pipelining technique, shows that it can achieve a peak throughput over $10^5$ requests per second. 

\newcolumntype{I}{!{\vrule width 1pt}}
\begin{table}[h]
	\vspace{-3mm}
	\centering
	\caption{Amortized cost when the leader is honest and after GST}
	\linespread{1.2}
	\vspace{-1mm}
	\small{
		\begin{threeparttable}
			\begin{tabular}{| m{2.2cm}<{\centering}|c|c|c|c|c|}
				\hline
				\multirow{2}{*}{Protocol}& \multicolumn{2}{c|}{Communication} & Scaling & \multicolumn{2}{c|}{\# of voting} \\
				\cline{2-3}\cline{5-6}
				& leader & non-leader & factor & optimistic & faulty\\
				\hline
				PBFT \cite{PBFT99} & \multirow{4}{*}{$O(n)$} & \multirow{4}{*}{$O(1)$} &\multirow{4}{*}{$O(n)$} & 2 & 2 \\
				\cline{1-1} \cline{5-6}
				SBFT \cite{SBFT19} & &&& 1 & 2 \\
				\cline{1-1} \cline{5-6}
				HotStuff \cite{HotStuff-imple19} (with pipelining) & & & & 1 & 1 \\
				\hline
				\textbf{Leopard}& $O(1)$& $O(1)$ & $O(1)$ &2 & 3 \\
				\hline
			\end{tabular}
		\end{threeparttable}
	}
	\vspace{-1mm}
	\label{table:compare}
\end{table}

Table \ref{table:compare} compares the amortized the communication complexity, the scaling factor and number of voting rounds, among our protocol and some of the most prominent prior works, 
when the leader is honest and after GST, i.e., the most possible case a high throughput can be achieved. We compare the number of voting rounds in two cases: the \textit{optimistic} case that all replicas follow the protocol, and the \textit{faulty} case that there are $f$ faulty non-leader replicas who may arbitrarily deviate from the protocol. 
We see that the leader's amortized communication complexity and the scaling factor in all compared protocols are $O(n)$, whereas Leopard achieves $O(1)$ on both these measures. This benefits largely from the technique of decoupling. 

The technique of decoupling is not new and has been used, for example, in Prism \cite{Prism19}, which is an excellent blockchain protocol improving the efficiency of Bitcoin. Our protocol is inspired by it, but with a different purpose: we decouple blocks to balance the workload of replicas and remove the bottleneck in leader-based BFT protocols. This technique also inspires other works. Recently, we were made aware of a concurrent work \cite{Narwhal21} 
using the reliable broadcast (RBC), a common technique to balance the workload of a specific node on data dissemination. While our protocol aims at preserving throughput when the protocol's scale increases, \cite{Narwhal21} targets designing a BFT protocol with higher throughput. Besides, the quadratic communication complexity in existing RBC schemes \cite{ADD21} is not friendly to support large scale replicas. 
Every block of requests in \cite{Narwhal21} requires to link $2f+1$ blocks in the previous round from different replicas. In the presence of bandwidth variability at replicas, the speed of request dissemination in \cite{Narwhal21} is thus subject to the slowest $f+1$ replica. As a contrast, the request dissemination in our protocol proceeds independently at each replica, thus it benefits to take full advantage of the bandwidth resource of replicas.  

There are several other works 
also mention the bottleneck in leader-based BFT protocols. However, some of them bypass the problem and move to the leader-less BFT paradigm \cite{Dumbo20}. 
Another recent work of Kauri \cite{Kauri21} addresses the bottleneck by leveraging the communication tree 
to optimize data dissemination. 
Although being improved by Kauri, the advantage of this technique is quite fragile to faults, where a faulty replica can stop delivering the message to replicas residing on the subtree rooted by it. Since there are about 1/3 fraction of the replicas are faulty, this technique seems still precarious.

\section{Preliminaries}
\label{section:model}

\subsection{Model}
We consider a system consisting of a fixed set of $n=3f+1$ replicas, indexed by $i\in[n]$, where at most $f$ replicas are Byzantine and other replicas are honest. 
We often refer to Byzantine replicas as being coordinated and fully controlled by an adversary. 
Each replica holds a signature key pair $(tpk_i,tsk_i)$ and the master public key $mpk$ of the threshold signature scheme. The identities of replicas and their public keys are known to all. 
The adversary cannot break the cryptographic primitives. 
Network communication among replicas is point-to-point, authenticated, and reliable: an honest replica receives a message from another replica if and only if the latter sent that to the former. We adopt the\textit{ partially synchronous network model} of Dwork et al. \cite{DLS88}, where a known bound $\Delta$ on message transmission holds 
after some unknown time point called the global stabilization time (GST). 
This model follows many BFT protocols, especially PBFT and HotStuff. 

\subsection{Notations}
We make standard cryptographic assumptions and we use the following cryptographic constructions. $\mathsf{H}(m)$ is a collision-resistant cryptographic hash function, producing a hash value for the input message $m$. 


We make use of a $(2f+1,n)$-threshold signature scheme $\mathcal{TS}=(\mathsf{TSig,TVrf,TSR})$: $\mathsf{TSig}(tsk_i,m)$ is the threshold signature algorithm on inputting secret key $tsk_i$ of replica $i$ and message $m$, it outputs a signature share $\hat{\sigma}_i$; 
$\mathsf{TVrf}(tpk_i,\hat{\sigma}_i,m)$ is the verification algorithm that outputs 1 if $\hat{\sigma}_i$ is a valid threshold signature share of $m$ under $tpk_i$, and outpus 0 otherwise; 
$\mathsf{TSR}(S)$ is the combining algorithm on inputting a set $S$ of $2f+1$ valid signature shares of message $m$, it outputs a combined signature $\hat{\sigma}$ which is verified by invoking $\mathsf{TVrf}(tpk,\hat{\sigma},m)$.

We use $(f+1, n)$-erasure codes \cite{ECCtheory83} where a message $m$ is encoded into a set of $n$ chunks, of which any subset of $f+1$ valid chunks will decode to obtain $m$.

\subsection{Design goals} 
We aim to design a leader-based BFT protocol that preserves high throughput when the protocol's scale increases. The security and efficiency goals are as follows: 

\smallskip
\noindent\textbf{The security goals.} A secure BFT protocol should satisfy two basic security properties:
\begin{itemize}
	\item \textit{Safety}: The requests recorded in the same position of two output logs from honest replicas are identical.
	\item \textit{Liveness}: Every request submitted by the client eventually receives an acknowledgment of confirmation.
\end{itemize}

\noindent\textbf{The efficiency goal.} The efficiency goal is measured by a new performance metric called \textit{scaling factor}, which captures the relation between the scale of the protocol and the maximal communication cost among replicas.

\begin{definition}[Scaling Factor]
	Scaling factor of a BFT protocol is the heaviest communication cost in bits, with respect to the scale of the protocol, among all honest replicas on confirming one bit of request by executing the protocol.
\label{def:SFC}
\end{definition}

To preserve the throughput as the protocol's scale increases, it should satisfy that\footnote{We also expect the increase of protocol's throughput is 
	significant to added bandwidth resource when scaling up the system. We omit this as an efficiency goal here, since to our knowledge, no existing tight upper bound defining the fraction of resource usage in a leader-based BFT protocol with the ``best" scaling up performance. We leave it as future work. Nevertheless, we present an analysis of the effectiveness when scaling up Leopard (shown in $\S$\ref{section:analysis}).}:



%

\begin{itemize}
	\item \textit{Constant scaling factor}: After GST, the scaling factor of the protocol is constant.	
	

\end{itemize}

\section{Protocol of Leopard}
\label{section:protocol}

\begin{figure} 
	\centering{\includegraphics[width=6cm]{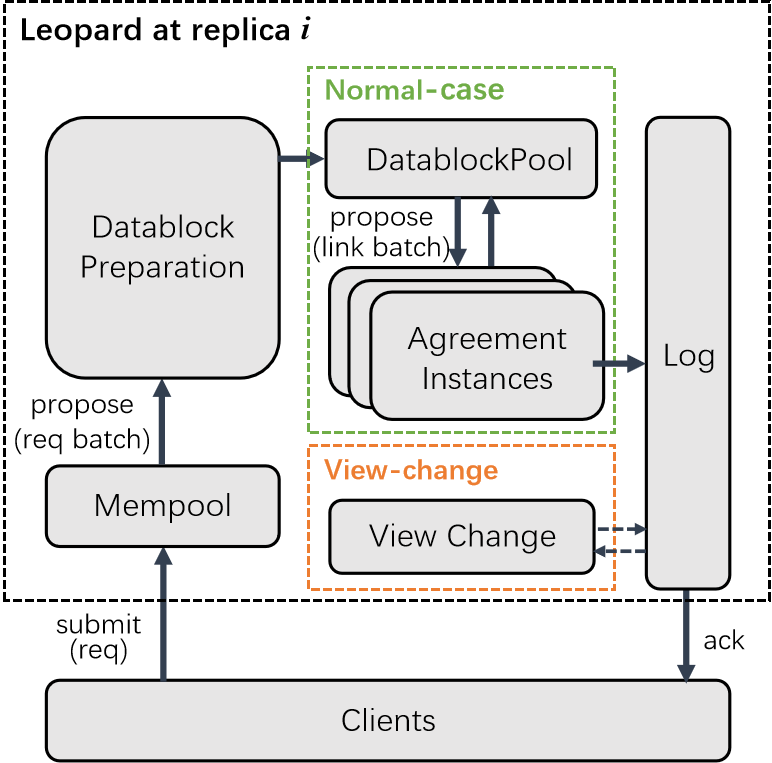}}
	\centering \caption{The architecture of Leopard.}
	\vspace{-4mm}
	\label{fig:architecture}
\end{figure}

The protocol proceeds in successive views starting from 1. Each view consists of multiple agreement instances and a view-change. An agreement instance is to confirm a consensus proposal, proposed by the view leader which is known to all. A view-change is to replace a faulty leader and advance the protocol to the next view. 
The output of our protocol is a growing log of confirmed requests from agreement instances. 
Fig. \ref{fig:architecture} depicts the architecture of Leopard that features a memory, a datablock preparation, a normal-case modular, 
a view-change modular 
and a log. 
The typical message flow of Leopard is shown in Fig. \ref{fig:message-flow}. 
\begin{figure} [!h]
	\vspace{-2mm}
	\centering{\includegraphics[width=8.5cm]{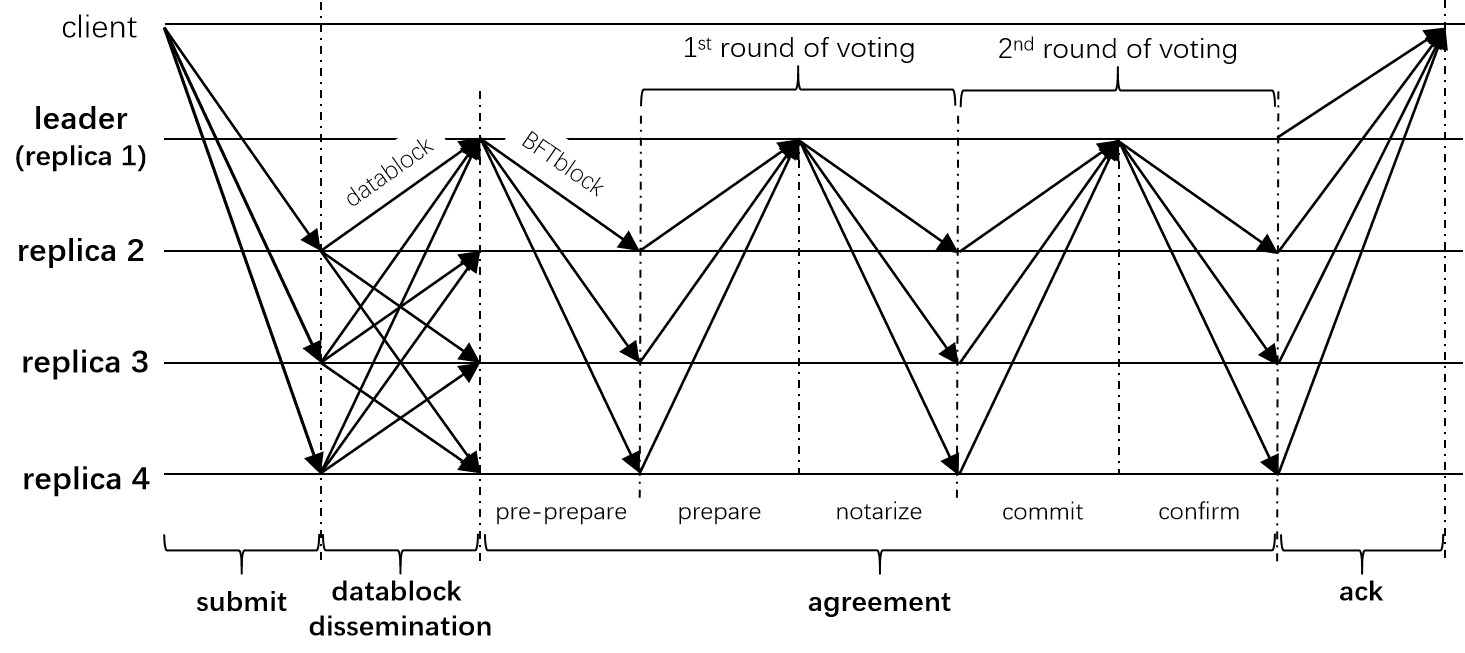}}
	\vspace{-2mm}
	\centering \caption{The typical message flow of Leopard with $n=4$.}
	\label{fig:message-flow}
\end{figure}

Before describing the detail of the protocol, we first define two newly introduced blocks: \textit{datablock} and \textit{BFTblock}. 

(i) Datablock: A datablock is generated by a non-leader replica using requests received from clients. The formation of a datablock is: $\langle\texttt{datablock}, (i,counter), \mathcal{R}\rangle$, where $\mathcal{R}$ is a set of requests, $i$ is the identity of its generator, and $counter$ shows the number of datablocks having been generated by $i$. 

(ii) BFTblock: The BFTblock is what replicas agree on. 
It is generated by the current leader, using received outstanding datablocks. A BFTblock only contains links (i.e., hashes) to the involved datablocks. The formation of a BFTblock is: $\langle\texttt{BFTblock},(v,sn),ct\rangle$, where $v$ is the view number, $sn$ is the serial number assigned by the leader, and $ct$ is the content (i.e., hashes of datablocks). 
A BFTblock has two states: \textit{notarized} and \textit{confirmed}. A BFTblock is notarized if it has completed the first round of voting, thus a corresponding \textit{notarization proof} exists; a BFTblock is confirmed if it has completed the second round of voting, thus a corresponding \textit{confirmation proof} exists. Replicas change the state of BFTblocks on receiving valid proofs.

\subsection{Protocol Description}
\label{sec:protocol-description}
Leopard is composed of a suite of tiered components to reach consensus on pending requests. For most of the time, replicas execute the normal-case modular to confirm requests, a.k.a. the normal-case mode of the protocol. This happens as long as the current view leader is honest and after GST. We present the protocol description, including datablock preparation and agreement, in this section. These are components of the normal-case modular.   
The view-change modular is triggered only if the progress of confirmation delays. It moves the protocol to the next view with a new view leader, and then starts the normal-case mode all over again. 
Our view-change mechanism for dealing with faulty leaders is mostly based on PBFT, which we defer the details of this part to appendix \ref{appendix:detailprotocol}. In the below, we assume the current view number is $v$ and denote the view leader as $L_v$.

\subsubsection{Datablock Preparation}
Datablock preparation is to synchronize pending requests among replicas by non-leader replicas, who receive requests from clients. 
A client identifies a non-leader replica out of the available replicas and submits its pending requests to the replica. It then waits for acknowledgments of their confirmations. 


In the large-scale environment, geo-distributed replicas are likely to receive requests from their neighbor clients, which makes the received requests by a replica disjoint from other replicas. Considering that the request generation rate at different regions may vary, some replicas may be too busy to process a large volume of requests, while others may be idle. To balance the processing volume among replicas, a request $req$ can also be uniformly assigned by executing a deterministic function $\mu(req)$, which returns the identifier of the replica who is responsible to process $req$. Note that this determinism does not imply predictability as one can change the responsible replicas directly \cite{redbelly21}. 
To prevent a Byzantine replica who may censor (e.g., drop or ignore) some requests, a client could change another responsible replica and re-submit the request to it, if no acknowledgment is received after a timer is expired (as in \cite{SBFT19,PBFT99}). The timer can be adaptively set based on past network profiling in practice. Since there are at most $f$ Byzantine replicas, up to $f$ times changes will guarantee the existence of an honest replica, who is centain to disseminate the request to all the others. The number of identified replicas in each submit can also be as large as $f+1$, however, more replicas lower latency whereas fewer replicas increase throughput. 

On receiving requests from the client, non-leader replicas generate datablocks following Algorithm \ref{alg:datablock}. This process occurs on an ongoing basis and proceeds independently at each replica. Replicas will \textit{continually} process pending requests to take full advantage of resources. 
Specifically, each non-leader replica records a counter $d$ initiated by 1. It extracts a set $\mathcal{R}$ of pending requests, and a new datablock $m$ is generated using $\mathcal{R}$ and the current counter $d$. After $m$ has been multicast, $d$ is increased by 1 indicating that a new datablock has been generated\footnote{The counter can also help to limit the number of valid datablocks during a period of time. This rate-limit mechanism is a common technique to avoid the flooding attack \cite{rate-limit14} by Byzantine replicas, who may send a massive amount of datablocks filled with repetitive requests.}. 
On receiving a datablock $m$ from replica $i$, replica $j$ puts it into its datablockPool, if no datablock with a repetitive counter has been received from $i$.

\begin{algorithm}
	\caption{Datablock Dissemination}
	\small{
		\begin{algorithmic}[1] 
			\State \textcolor{purple}{// Datablock generation}
			\State (\textbf{for} each non-leader replica $i,i\in[n]$)
			\State Init: $d\leftarrow1$
			\While{()}
			\State  let $\mathcal{R}$ be a set of requests extracted from its mempool
			\State $m\gets\langle\texttt{datablock}, (i,d),\mathcal{R}\rangle$
			\State  multicast $m$ to all other replicas 
			\State  $d\gets d+1$
			\State mempool $\leftarrow$ mempool$\backslash\mathcal{R}$ \cmd{remove processed requests}
			\EndWhile
			
			\vspace{1mm}
			\State \textcolor{purple}{// Datablock verification}
			\State (\textbf{for} replica $j,j\in[n]$)
			\State \textbf{upon} receiving a datablock $m$ from replica $i$
			\State\quad\textbf{if} no datablock with the same $d$ has been received from $i$ \textbf{then}
			\State\quad\quad datablockPool $\leftarrow\textrm{datablockPool}\cup\{m\}$  
			\State\quad\quad\textbf{return} accept
		\end{algorithmic}
	}
	\label{alg:datablock}
\end{algorithm}

\subsubsection{Agreeing on BFTblocks}
Requests are being confirmed by a sequence of agreement instances, each processes a BFTblock generated by the leader $L_v$, and these instances can be conducted \textit{in parallel}. This enables handling new requests without waiting for the completion of previous ones. We limit the number of parallel-executed BFTblocks to $k$ (e.g., 100 as in PBFT), to avoid a Byzantine leader picks a number that is so high to exhaust the space of serial numbers. We also set a watermark $lw$ to limit the region of a valid serial number of BFTblock. Since $k$ can be set large enough and the parameter $lw$ will be advanced by the garbage collection mechanism described in appendix \ref{appendix:detailprotocol}, the process of agreeing will not stall waiting for the completion of previous BFTblocks. Thus, the parallel execution can be maintained.  

An agreement instance consists of two-round voting, initiated by the leader proposing a BFTblock. 
We adopt threshold signatures to reduce the communication cost in each agreement instance. Due to that, each voting round consists of two stages: one is for replicas sending votes to a specific node, another is to multicast a completion proof of this voting round by the node. 
Each vote is instantiated by a threshold signature share, and a proof is derived by aggregating $(2f+1)$ votes. The specific node is critical for the completion of the agreement, and we let the leader do this job as in \cite{HotStuff-imple19} since a faulty leader will be replaced anyway (by the view-change). 

\begin{algorithm}[!t]
	\caption{Agreeing on BFTblocks}
	\small{
		\begin{algorithmic}[1] 
			\State Init: $sn\gets 1;lw\leftarrow0$
			\vspace{1mm}
			\State \textcolor{purple}{// Pre-prepare stage} 
			\State (\textbf{for} leader $L_v$)
			\State \quad let $ct$ be a set containing the hashes of outstanding datablocks
			\State \quad$m \gets\langle\texttt{BFTblock},(v,sn),ct\rangle$ \cmd{generate a BFTblock $m$}
			\State \quad$\hat{\sigma}_{L_v}^1\gets\mathsf{TSig}(tsk_{L_v},\mathsf{H}(m))$ \cmd{sign the hash of $m$ with $tsk_{L_v}$}
			\State \quad multicast $(m,\hat{\sigma}_{L_v}^1)$ to all replicas
			\State \quad$sn\gets sn+1$
			
			\vspace{1mm}
			\State \textcolor{purple}{// Prepare stage}
			\State (\textbf{for} replica $i,i\in[n]$)
			\If {$\textsc{VrfBFTblock}(m,\hat{\sigma}_{L_v}^1)\rightarrow1$} 			
			\State $\hat{\sigma}_{i}^1\gets\mathsf{TSig}(tsk_{i},\mathsf{H}(m))$\cmd{generate $i$'s signature share}
			\State send $(\mathsf{H}(m),\hat{\sigma}_{i}^1)$ to $L_v$\cmd{first-round vote}
			\EndIf
			
			\vspace{1mm}
			\State \textcolor{purple}{// Notarize stage}
			\State (\textbf{for} leader $L_v$)
			\State \quad wait for $2f+1$ valid signature shares for $\mathsf{H}(m)$
			\State\quad $\hat{\sigma}^1\gets\mathsf{TSR}(\{\hat{\sigma}_j^1\}_{j=1}^{2f+1})$ \cmd{the notarization proof for $m$}
			\State \quad multicast $(\mathsf{H}(m),\hat{\sigma}^1)$ to all replicas
			
			\vspace{1mm}
			\State \textcolor{purple}{// Commit stage}
			\State (\textbf{for} replica $i,i\in[n]$)
			\If {$\mathsf{TVrf}(tpk,\hat{\sigma}^1,\mathsf{H}(m))\rightarrow 1$}
			\State change the state of $m$ to \textit{notarized} 
			\State $\hat{\sigma}_{i}^2\gets\mathsf{TSig}(tsk_{i},\mathsf{H}(\hat{\sigma}^1))$ 
			\State send $(\mathsf{H}(\hat{\sigma}^1),\hat{\sigma}_{i}^2)$ to $L_v$ \cmd{second-round vote}
			\EndIf
			
			\vspace{1mm}
			\State \textcolor{purple}{// Confirm stage}
			\State (\textbf{for} leader $L_v$)
			\State \quad wait for $2f+1$ valid signature shares for $\mathsf{H}(\hat{\sigma}^1)$
			\State \quad $\hat{\sigma}^2\gets\mathsf{TSR}(\{\hat{\sigma}_j^2\}_{j=1}^{2f+1})$ \cmd{the confirmation proof for $m$}
			\State \quad multicast $(\mathsf{H}(\hat{\sigma}^1),\hat{\sigma}^2)$ to all replicas
			
			\vspace{1mm}
			\State (\textbf{for} replica $i,i\in[n]$)
			\State\textbf{if} {$\mathsf{TVrf}(tpk,\hat{\sigma}^2,\mathsf{H}(\hat{\sigma}^1))\rightarrow 1$} \textbf{then}
			\State \quad change the state of $m$ to \textit{confirmed}
			\State \quad add $m$ into $i$'s log following the order of $m$'s serial number
			
			\vspace{3mm}
			\Function{VrfBFTblock}{$m,\hat{\sigma}_{L_v}^1$}
			\State\textbf{if} $(\mathsf{TVrf}(tpk_{L_v},\hat{\sigma}_{L_v}^1,\mathsf{H}(m))\rightarrow0) || (m.v$ is not the current view number$) || ($some other BFTblock with the same $sn$ has been voted in view $v)|| (lw<sn\leq lw+k)$ ~\textbf{then} \cmd{similar to the condition in PBFT}
			\State\qquad\textbf{return} 0			
			\State\textbf{else if} $\exists\mu\in m.ct,$ s.t. no corresponding datablock in datablockPool ~\textbf{then} \cmd{the datablock has not been received}
			\State\qquad retrieve every missing datablock \cmd{invoke the datablock retrieval mechanism}
			\State\textbf{return} 1 \cmd{all linked datablocks in $m$ have been received}
			\EndFunction		
		\end{algorithmic}
	}
	\label{alg:normal-case}
\end{algorithm}

%

As shown in Algorithm \ref{alg:normal-case}, each consensus proposal (i.e., BFTblock) only contains links to datablocks in the leader's datablockPool. 
It thus requires to ensure the receival of all linked datablocks before casting a vote on the BFTblock. Specifically, each replica checks the receival of every linked datablock (line 39) on receiving a BFTblock, using the \textsc{VrfBFTblock} function. Only if all linked datablocks have been received, the replica can cast the first-round vote on this BFTblock. 

\smallskip
\noindent\textbf{Datablock Retrieval.} 
The confirmation of requests works properly if replicas follow the protocol. However, in the presence of Byzantine faults, our data-decoupled BFT paradigm may cause that, a faulty non-leader replica only sends its generated datablocks to a small subset of replicas 
and ignores others, thus conducting a selective attack during the datablock dissemination. Since it's hard to identify datablocks coming from faulty replicas, the leader may multicast BFTblocks containing datablocks that will never be received by other replicas. This leads to not enough votes for completing the confirmation, thus breaking the liveness.

To preclude this problem, a recovery mechanism is needed to restore liveness. 
Notice that replicas discover a missing datablock only after receiving a BFTblock from the leader. When the leader is honest, it must have received each datablock linked by its generated  BFTblock. An intuitive solution is to let the leader help recover. 
A replica sends a query 
to the leader, who will respond with the corresponding datablock, if it is the first time of query for this datablock from this replica. 

\begin{figure}[t]
	\vspace{-2mm}
	\centering{\includegraphics[width=8.5cm]{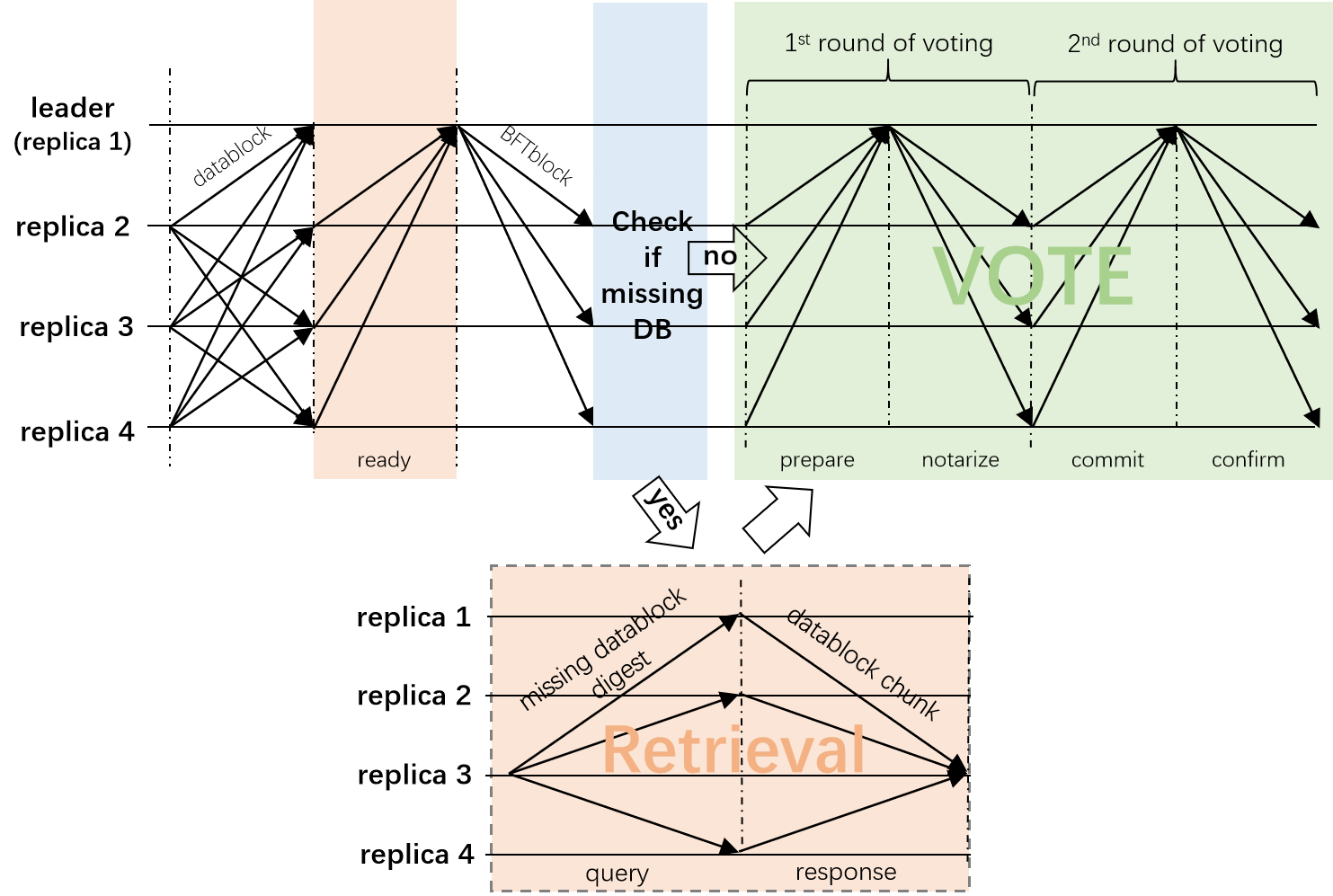}}
	\vspace{-2mm}
	\centering \caption{The message flow of Leopard with the improvement on the retrieval mechanism by replica $i$, using $i=3$ as an example.}
	\vspace{-6mm}
	\label{fig:retrieve}
\end{figure}

The intuitive solution works in most cases. However, it may overload the leader, if a large number of datablocks should be retrieved. This can happen since about 1/3 fraction of the replicas can be faulty, and each faulty replica may send all its datablocks only to the leader. Even worse, a faulty replica may pretend to have not received datablocks from honest replicas, and query these datablocks to further overload the leader. Hence, this solution may eliminate the benefits brought by our workload balancing method in the above.


To improve the retrieval, we instead use a committee of replicas to collectively help retrieve missing datablocks. 
The algorithm is shown in Algorithm \ref{alg:retrieve2}, and the message flow is shown in Fig. \ref{fig:retrieve}. A replica checks whether there is a missing datablock when receiving a BFTblock. If no missing datablock appears, the replica starts the voting period on the BFTblock; otherwise, the replica retrieves from a committee of other replicas. After that, the replica gets back to the voting period. 
To ensure the existence of a committee of replicas that have received the queried datablocks and are willing to help, we add an extra round of voting before the leader initiates a new agreement instance. On accepting new datablocks, replicas send ready messages containing the hashes of datablocks to the leader. 
The leader waits for at least $2f+1$ ready messages for a datablock before linking it by the generated BFTblock. 

Once a replica $i$ realizes a missing datablock, it multicasts a query message. If replica $j$ has the datablock and it is the first time receiving a query of this datablock from $i$, $j$ 
encodes the datablock using a $(f+1,n)$-erasure code and constructs a Merkle tree out of the encoded chunks. $j$ sends the chunk $C_j$ along with the Merkle root $rt$ and a Merkle tree proof $proof_j$ that proves the chunk $C_j$ belongs to root $rt$. Replica $i$ needs to wait for at least $f+1$ chunks under the same Merkle root for decoding and obtaining the datablock. 
When multiple missing datablocks are found, a query can be sent along with all hashes of these datablocks.

\begin{algorithm}
	\caption{Retrieval}
	\small{
		\begin{algorithmic}[1]
			\State \textcolor{purple}{// Ready}
			\State (\textbf{for} replica $i,i\in[n]$)
			\State \textbf{upon} accepting a datablock $m$
			\State\qquad send $\langle\texttt{Ready},\mathsf{H}(m),j\rangle$ to leader $L_v$
			
			\vspace{1mm}
			\State (\textbf{for} leader $L_v$)
			\State \textbf{upon} receiving $2f+1$ $\langle\texttt{Ready},\mathsf{H}(m),\_\rangle$ messages from distinct replicas for a datablock $m$ in datablockPool
			\State\qquad move $m$ to $L_v$'s readyblockPool \cmd{BFTblocks are required to link datablocks from the readyblockPool}
			
			\vspace{1mm}
			\State \textcolor{purple}{// Query}
			\State (\textbf{for} replica $i$)
			\State Timer.Start() \cmd{start a timer on the missing datablock}
			\State\textbf{if} receiving a valid datablock with digest $\mu$ \textbf{then}
			\State\quad Timer.Stop() \cmd{no query will be triggered on this datablock}
			\State\textbf{else if} Timer.Timeout() \textbf{then}
			\State\quad $\hat{m}\gets\langle\texttt{Query},(\mu,i)\rangle$
			\State\quad multicast $\hat{m}$ to all replicas
			
			\vspace{1mm}
			\State \textcolor{purple}{// Response}
			\State (\textbf{for} replica $j,j\in[n]$)
			\State\textbf{if} $(\exists m\in$ datablockPool, s.t., $\mathsf{H}(m)=\mu)$ \& ($m$ has not been queried by replica $i$) ~\textbf{then} \cmd{find the corresponding datablock}
			\State\qquad encode the datablock $m$ using a $(f+1,n)$-erasure code, and output $n$ chunks $C_1,C_2,\cdots,C_n$
			\State\qquad compute the Merkle tree root $rt$ with all chunks as leaves
			\State\qquad send $\langle\texttt{Resp},rt,C_j,\textrm{proof}_i,j\rangle$ to replica $i$, where $\textrm{proof}_j$ is the Merkle tree proof for $C_j$ under root $rt$
			
			\vspace{1mm}
			\State (\textbf{for} replica $i$)
			\State\textbf{upon} receiving $\langle\texttt{Resp},rt,C_j,\textrm{proof}_j,j\rangle$ from replica $j$
			\State\quad\textbf{if} $C_j$ is valid under root $rt$ by verifying $\textrm{proof}_j$ ~\textbf{then} 
			\State\qquad store the chunk $C_j$ with the root $rt$
			\State\quad\textbf{end if}
			\State\quad wait for $f+1$ valid chunks with the same root $rt$
			\State\quad decode using these chunks to recover datablock $m$
		\end{algorithmic}
	}
	\label{alg:retrieve2}
\end{algorithm}

\noindent\textbf{Execution and acknowledgement.} 
Since a BFTblock contains links to datablocks and a datablock contains requests, the confirmation of a BFTblock means the confirmation of all its linked requests. Replicas store confirmed BFTblocks in its local log following the order of the serial number. The corresponding requests are also logically stored in the log, whereas the requests linked by a BFTblock can be ordered in alphabetical order. Replicas execute requests linked in the longest sequence of consecutive BFTblocks of their local logs, after which acknowledgements will be returned to clients.

We remark that this paper focuses on reaching consensus among replicas, as in many other BFT protocols \cite{PBFT99,HotStuff-imple19}, and hence the current version doesn't check the request validity. When it is being applied, an application-specific predicate $\mathsf{verify}(\cdot)$ should be invoked on each request before casting the first round of voting. Since the protocol requires that each linked datablocks by BFTblocks has been received before casting a vote, $\mathsf{verify}(\cdot)$ can be an extra condition in $\textsc{VrfBFTblock}(\cdot)$. Also, a similar technique as in \cite{redbelly21} can be used to reconcile conflicted requests before they output. 

\section{Analysis}
\label{section:analysis}
\subsection{Security Analysis}
The safety of Leopard is captured in Theorem \ref{theo:safety1}, and the liveness is captured in Theorem \ref{theo:liveness1}. Due to the FLP impossibility \cite{FLP83}, the liveness of Leopard relies on the synchronous assumption. We present a proof sketch below. The full version of the proof appears in appendix \ref{appendix:security}.

\begin{theorem}[Safety]
	If two different requests exist on the same position of two output logs, then the two logs cannot be both from honest replicas.
	\label{theo:safety1}
\end{theorem}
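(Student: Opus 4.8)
The plan is to argue safety by contradiction, following the classical PBFT-style quorum-intersection argument adapted to the two-round voting structure of Leopard. Suppose two honest replicas output logs in which the same position holds two different requests. Since requests are linked into datablocks and datablocks into BFTblocks, and replicas store confirmed BFTblocks in log order by serial number, this means there exist two distinct confirmed BFTblocks $m$ and $m'$ that are assigned the same serial number $sn$ within (possibly different) views $v$ and $v'$. I would first reduce to this BFTblock-level statement: two honest logs disagreeing at a position forces two distinct BFTblocks confirmed with the same $sn$, because (i) within a view, an honest replica only ever votes for one BFTblock per $sn$ (enforced by the third disjunct of \textsc{VrfBFTblock}), and (ii) the serial number determines the position in the log.

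Next I would handle the two cases. \emph{Same view $(v=v')$:} a confirmed BFTblock has a confirmation proof, which aggregates $2f+1$ second-round votes; having a valid confirmation proof implies at least $2f+1$ replicas saw a notarization proof, hence at least $2f+1$ first-round votes were cast for $m$. If both $m$ and $m'$ were notarized in view $v$ with the same $sn$, the two quorums of size $2f+1$ intersect in at least $f+1$ replicas, hence in at least one honest replica, contradicting the fact that an honest replica casts at most one first-round vote per $(v,sn)$. \emph{Different views (say $v<v'$):} here the argument must invoke the view-change mechanism (deferred to the appendix). The key invariant to establish is that once a BFTblock $m$ is confirmed in view $v$ with serial number $sn$, any BFTblock that can be notarized in a later view $v'>v$ with the same $sn$ must equal $m$. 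This follows from the standard PBFT view-change safety property: a new leader must collect $2f+1$ \textsc{view-change} messages, their quorum intersects the confirmation quorum of $m$ in an honest replica, and that replica reports $m$ (as notarized/confirmed) in its view-change message, forcing the new leader to re-propose $m$ at $sn$; honest replicas reject any conflicting proposal via \textsc{VrfBFTblock}. Chaining this across successive views gives the contradiction.

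I would structure the write-up as: (1) a lemma that within a single view at most one BFTblock per $sn$ can be notarized (direct quorum intersection on first-round votes); (2) a lemma that confirmation in view $v$ is ``locked in'' across view changes, appealing to the appendix's view-change rules and quorum intersection; (3) combine to conclude no two distinct BFTblocks share a confirmed $sn$, and translate back to the request level. The main obstacle is step (2): it requires carefully stating the view-change data (what an honest replica attaches about its highest notarized/confirmed BFTblock per slot) and verifying that the new-leader proposal rule genuinely preserves every already-confirmed BFTblock — essentially re-deriving PBFT's view-change safety in Leopard's notation. Since the excerpt defers the view-change to appendix \ref{appendix:detailprotocol}, in the proof sketch I would cite that mechanism and its PBFT lineage rather than re-prove it in full, and concentrate the detailed argument on the quorum-intersection core.
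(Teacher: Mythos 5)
Your proposal is correct and follows essentially the same route as the paper: the paper's appendix proves a uniqueness lemma (one notarized BFTblock per serial number per view, via quorum intersection on the $2f{+}1$ signers) and a cross-view lemma (the confirmation quorum intersects the $2f{+}1$ view-change messages in an honest replica, forcing the new view to redo the same BFTblock at that serial number), then reduces the request-level claim to the BFTblock level exactly as you describe. Your decomposition into the two lemmas and your identification of the view-change ``lock-in'' step as the delicate part match the paper's structure precisely.
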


\textit{Proof sketch:} Since the confirmations of requests are reached through the confirmations of BFTblocks, we only have to prove that no two different BFTblocks with the same serial number appear in two output logs of honest replicas. The proof follows that of PBFT. The idea is that, a BFTblock gets confirmed only if $2f+1$ replicas have voted on it, which contain at least $f+1$ honest replicas. Since there are $2f+1$ honest replicas in total, if two different BFTblocks with the same serial number are confirmed, there must be at least one honest replica who has voted on both of them, which comes to a contradiction according to the protocol. 

\begin{theorem}[Liveness]
	After GST, the confirmation for a pending request will always be reached. 
	\label{theo:liveness1}
\end{theorem}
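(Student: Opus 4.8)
The plan is to show that, after GST, every pending request submitted by a client eventually appears in the output log of every honest replica. I would proceed by first reducing liveness of requests to liveness of agreement instances under an honest leader, and then showing that such an honest leader is eventually reached and that its agreement instances cannot get stuck. The overall skeleton mirrors the PBFT liveness argument, but the crucial new ingredient is handling the data-decoupled structure: a BFTblock only links datablocks, so progress additionally requires that linked datablocks are delivered to (enough) honest replicas.

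\textbf{Step 1: Request delivery.} I would first argue that a pending request $req$ is eventually contained in some honest replica's datablock. The client sends $req$ to an identified non-leader replica; if no confirmation acknowledgment returns before the client's timer expires, the client re-submits to another responsible replica. Since there are at most $f$ Byzantine replicas, after at most $f$ re-submissions the request reaches an honest replica, which by Algorithm~\ref{alg:datablock} continually extracts pending requests from its mempool and multicasts a datablock containing $req$. After GST this datablock reaches all honest replicas within $\Delta$, so they place it in their datablockPool; after receiving $2f+1$ \texttt{Ready} messages an honest leader moves it to its readyblockPool.

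\textbf{Step 2: Eventually an honest, synchronized leader drives progress.} Following PBFT, the view-change mechanism (appendix~\ref{appendix:detailprotocol}) guarantees that if confirmation progress stalls, honest replicas time out and move to the next view; with timeouts that grow (e.g., doubling) past $2\Delta$ after GST, honest replicas eventually synchronize in a view $v$ whose leader $L_v$ is honest. I would then show that in such a view the honest leader proposes BFTblocks linking only datablocks in its readyblockPool, i.e., datablocks for which $2f+1$ replicas (hence at least $f+1$ honest replicas) sent \texttt{Ready}. Hence every linked datablock is held by at least $f+1$ honest replicas.

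\textbf{Step 3: No agreement instance gets stuck; the datablock retrieval always succeeds.} This is the heart of the argument and the step I expect to be the main obstacle, because it is the genuinely new part relative to PBFT. An honest replica $i$ receiving a well-formed BFTblock from $L_v$ may lack some linked datablock; it then multicasts a \texttt{Query}. Because at least $f+1$ honest replicas hold that datablock (Step~2) and each responds (once per querying replica) with an erasure-coded chunk plus Merkle proof, replica $i$ collects at least $f+1$ valid chunks under a common root and decodes the datablock; the $(f+1,n)$-erasure code and collision resistance of $\mathsf{H}$ guarantee correct recovery. After GST this completes within $O(\Delta)$. Consequently every honest replica passes $\textsc{VrfBFTblock}$ and casts its first-round vote within bounded time; the honest leader then collects $2f+1$ shares, forms and multicasts the notarization proof, honest replicas cast second-round votes, and the leader forms and multicasts the confirmation proof. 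Thus the BFTblock is confirmed and added to every honest replica's log; since serial numbers are consecutive and the watermark window $k$ does not stall (garbage collection advances $lw$), the request's BFTblock is eventually in the longest consecutive prefix, so its requests are executed and acknowledged. The subtlety to nail down carefully is that a Byzantine querying replica or Byzantine responders cannot prevent an \emph{honest} querier from recovering — which holds because honesty of $f+1$ responders suffices and invalid chunks are rejected by the Merkle check — and that the extra \texttt{Ready} round itself cannot deadlock, since honest replicas send \texttt{Ready} on every accepted datablock and there are $2f+1$ of them.
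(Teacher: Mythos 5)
Your proposal is correct and follows essentially the same route as the paper's proof: request delivery via at most $f$ client re-submissions, the $2f{+}1$ \texttt{Ready} messages guaranteeing $f{+}1$ honest holders so that erasure-coded retrieval always succeeds under an honest leader, and the view-change mechanism to eventually reach an honest leader. Your Steps 2 and 3 are somewhat more explicit than the paper (e.g., on timer growth after GST and on the watermark not stalling), but the decomposition and the key counting arguments are the same.
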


\textit{Proof sketch:} 
In the case that the leader is honest, it will link every outstanding datablock that has been received in its generated BFTblocks, and initiate new agreement instances to confirm these BFTblocks. A pending request will be delivered to the leader through some datablock, otherwise, the client will resend the request to at most $f$ replicas, in which at least one honest replica will multicast the request in its datablock to all the other replicas including the leader. If a datablock $m$ is linked by a BFTblock, according to our datablock retrieval mechanism (see Algorithm \ref{alg:retrieve2}), there must be at least $2f+1$ ready messages for $m$, and it contains at least $f+1$ honest replicas who have obtained $m$. 
Hence, there must be enough (i.e., $f+1$) responses to help an honest replica who missed $m$ to recover $m$. Since there are at least $2f+1$ honest replicas, the confirmations of the honest leader's BFTblocks will be reached. 

In the case that the current leader is faulty, the liveness property is guaranteed by the view-change mechanism. The proof of this case follows that in PBFT.

\subsection{Efficiency Analysis}

We analyze the efficiency of Leopard using the performance metric of scaling factor, defined in $\S$\ref{section:model}. We also analyze the effectiveness of scaling up. Firstly, we discuss the communication cost of Leopard in three cases:

(a) In the optimistic case that all replicas follow the protocol and the network is during the period of synchronous, only Algorithm \ref{alg:datablock} and \ref{alg:normal-case} will be executed.  
Since the request dissemination is amortized by $n-1$ non-leader replicas, let $\Lambda_r$ be the number of pending requests, \textit{payload} denote the size of a request, and hence, each non-leader replica will process $\frac{\Lambda_r\times\textrm{payload}}{n-1}$ bits of requests. 

Let $\alpha$ be the number of bits per datablock, $\beta$ be the size of a hash output, and $\kappa$ be the size of a vote (i.e., a threshold signature). When processing $\Lambda_r$ requests, it will generate $\frac{\Lambda_r\times\textrm{payload}}{\alpha}$ datablocks in total. Since a BFTblock only contains the hashes of datablocks, we have that the total bits of generated BFTblocks are $\frac{\Lambda_r\times\textrm{payload}}{\alpha}\times\beta$. Let $\tau$ be the batch size of a BFTblock, the number of BFTblock, i.e., the number of agreement instances for processing $\Lambda_r$ requests, is $\frac{\Lambda_r\times\textrm{payload}}{\alpha\tau}$.

Following the protocol, the communication cost $\overline{c}_L$ of the leader includes the costs of receiving datablocks, multicasting BFTblocks, and processing votes. Thus, we have 
\begin{equation}
	\begin{aligned}
		\overline{c}_L&=\overline{c}_L^\textrm{recvDB}+\overline{c}_L^\textrm{agreement}\\
		&=\frac{\Lambda_r\times\textrm{payload}}{n-1}\times(n-1)\\
		&\hspace{4mm}+\frac{\Lambda_r\times\textrm{payload}}{\alpha}\cdot(\beta+\frac{1}{\tau}4\kappa)\cdot(n-1)\\
		&=(\frac{(\beta+4\kappa/\tau)\cdot(n-1)}{\alpha}+1)\cdot(\Lambda_r\times\textrm{payload}).
	\end{aligned}
	\label{eq:opt-leader}
\end{equation}

As a non-leader replica, it receives pending requests from clients and datablocks from the other $n-2$ replicas. It multicasts its generated datablocks to all the other $n-1$ replicas. It also deals with BFTblocks and votes during agreement instances. Hence, a non-leader replica's communication cost $\overline{c}_R$ on processing $\Lambda_r$ requests is,
\begin{equation}
	\begin{aligned}
		\overline{c}_R&=\overline{c}_R^\textrm{DB}+\overline{c}_R^\textrm{agreement}\\
		&=\frac{\Lambda_r\times\textrm{payload}}{n-1}\times(1+n-2+n-1) \\
		&\hspace{4mm}+\frac{\Lambda_r\times\textrm{payload}}{\alpha}\cdot(\beta+\frac{1}{\tau}4\kappa)\\
		&=(2+\frac{\beta+4\kappa/\tau}{\alpha})\cdot(\Lambda_r\times\textrm{payload}).
	\end{aligned}
	\label{eq:opt-member}
\end{equation}

(b) In the case that the network is during the period of synchronous and the leader is honest, but there are $f$ faulty replicas conducting the selective attack, such that their datablocks are only multicast to a subset of $(n-s)$ replicas including the leader, and they drop all datablocks from honest replicas, the retrieve algorithm will also be executed. 
Since $n=3f+1$, about $1/3$ fraction of datablocks are generated by faulty replicas, and each one of them will be queried at most $s$ times;  
about $2/3$ fraction of datablocks are generated by honest replicas, and each of them will be queried at most $f$ times.  
The total number of datablocks need to respond can be estimated by $\frac{\Lambda_r\times\textrm{payload}}{\alpha}\cdot(\frac{1}{3}s+\frac{2}{3}f)$. 
Since $s\leq 3f$, we have that the total number of responses is at most $\frac{\Lambda_r\times\textrm{payload}}{\alpha}\cdot\frac{5f}{3}$.

When using the intuitive solution to retrieve datablocks that all missing datablocks are responded to by the leader, the cost on responding is  $\frac{5}{3}(\Lambda_r\times\textrm{payload})\cdot f$. Thus, the communication complexity of the leader is $O(n)$. 
While for the improved retrieval mechanism (in Algorithm \ref{alg:retrieve2}), the size of each response is $(\frac{\alpha}{f+1}+\beta\log n)$. For each replica, the cost on responding is at most $(\frac{\alpha}{f+1}+\beta\log n)\times\frac{\Lambda_r\times\textrm{payload}}{\alpha}\frac{5f}{3}$. As an honest replica, it only queries datablocks generated by faulty replicas. To recover a missing datablock, it needs to receive $(f+1)$ responses from different replicas. 
Hence, the cost on receiving missing datablocks is $(f+1)(\frac{\alpha}{f+1}+\beta\log n)\cdot\frac{\Lambda_r\times\textrm{payload}}{3\alpha}$. Since replicas also have to send ready messages for each datablock to the leader, the extra communication cost per replica is at most $\frac{5\Lambda_r\times\textrm{payload}}{3\alpha}\cdot(\alpha+\beta(f\log n+\frac{3}{5}))$.


(c) In the worst case that the network is during the period of asynchronous, the analysis of the cost on datablock retrieval is similar to case (b) with a difference that, 
even an honest replica may query every datablock. According to case (b), we deduce that the cost on receiving missing datablocks per honest replica is at most $(f+1)(\frac{\alpha}{f+1}+\beta\log n)\cdot\frac{\Lambda_r\times\textrm{payload}}{\alpha}$, whereas the maximal cost on responding remains at $(\frac{\alpha}{f+1}+\beta\log n)\times\frac{\Lambda_r\times\textrm{payload}}{\alpha}\frac{5f}{3}$. Therefore, the cost per honest replica on datablock retrieval is no more than $\frac{5\Lambda_r\times\textrm{payload}}{3\alpha}\cdot(\alpha+\beta((f+1)\log n+\frac{3}{5}))$.

\vspace{1mm}
Now, we analyze the scaling factor. From \eqref{eq:opt-leader} and \eqref{eq:opt-member}, we have that

\begin{equation*}
	\begin{aligned}
		SF&=\max\{\overline{c}_L,\overline{c}_R\}/(\Lambda_r\times\textrm{payload}) \\
		&=\max\{\frac{(\beta+4\kappa/\tau)\cdot(n-1)}{\alpha}+1,2+\frac{\beta+4\kappa/\tau}{\alpha}\}.
		\label{eq:SF-ours}
	\end{aligned}
\end{equation*}
Notice that, $\alpha$ is a system parameter denoting the datablock size. When setting $\alpha=\lambda(n-1)$ where $\lambda$ is a constant positive number, we can obtain a scaling factor of $O(1)$. 

Next, we give an analysis of the effectiveness of scaling up. When adding more bandwidth resources to each replica, the processing capacity of the replica is increased. Let $\Lambda_b^\Delta$ denote the increased bits of requests that can be processed, and $C^\Delta$ denote the increased effective communication capacity when executing the protocol at each replica. Thus, we have
\begin{equation}
	\begin{aligned}
		\frac{\Lambda_b^\Delta}{C^\Delta}&=\frac{\Lambda_b^\Delta}{\max\{\overline{c}_L^\Delta,\overline{c}_R^\Delta\}}\\
		&=\frac{1}{\max\{\frac{(\beta+4\kappa/\tau)\cdot(n-1)}{\alpha}+1,2+\frac{\beta+4\kappa/\tau}{\alpha}\}}.
	\end{aligned}
	\label{eq:ce_ours}
\end{equation} 
When $\beta+4\kappa/\tau\leq\lambda$,\footnote{We note that $\beta+4\kappa/\tau\leq\lambda$ can be generally satisfied in implementations. Since the parameter $\alpha$ denotes the size of a datablock and $\alpha=\lambda(n-1)$, we have that the inequation equals to $n\leq\frac{\textrm{payload}}{\beta+4\kappa/\tau}X+1$ where $X$ is the number of requests in a datablock. We adopt the parameters like other BFT implementations \cite{HotStuff-imple19,SBFT19,BFTSmart} in our evaluation: $\beta=32$ bytes (use hash functions based on SHA-256), $\kappa=48$ bytes (use threshold signatures based on threshold BLS \cite{BLS01}), $\textrm{payload}=128$ bytes, and $\tau$ is usually in hundreds. Thus, $\beta+4\kappa/\tau\leq\lambda$ holds if $n\leq4X$.} 
we have that $\frac{(\beta+4\kappa/\tau)\cdot(n-1)}{\alpha}\leq1$. Hence, the value of $\frac{\Lambda_b^\Delta}{C^\Delta}$ approaches 1/2, especially in the large-scale environment when $n$ is large. 
It means that, when scaling up Leopard by adding $C^\Delta$ communication capacity at each replica, the throughput can be increased by $\frac{1}{2}C^\Delta$ at all scales of the protocol. 

We remark that, even if $f$ replicas conduct the selective attack when setting $\alpha=O(n\log n)$, the amortized communication complexity at each replica remains at $O(1)$. The communication cost, however, will increase by retrieving missing datablocks, and this will result in a less effective scaling up. 

\section{Experimental Evaluation}
\label{section:experiment}

We have implemented a prototype of Leopard in Golang, using threshold signatures based on threshold BLS \cite{BLS01,bls-code} and erasure codes based on Reed-Solomon erasure codes \cite{RScode60,rscode-code}. We evaluated the performance of Leopard compared to a state-of-the-art system, HotStuff\footnote{There are several other typical BFT systems like BFT-SMaRt \cite{BFTSmart} and SBFT \cite{SBFT19}. We presented a comparison of these protocols' amortized costs in $\S$\ref{section:compare}. The result in Table \ref{table:compare} shows that, the amortized cost of HotStuff is lower than that in BFT-SMaRt and SBFT (e.g., by the number of voting). Thus, we only evaluated the performance of Leopard with HotStuff.}, using the implementation by its authors \cite{HotStuff-code}. 
Our experiments were conducted on Amazon EC2 (c5.xlarge) instances in a single datacenter, each with 4 vCPUs and 9.8 Gbps TCP bandwidth. 
We ran each replica on a single EC2 instance. 
Every experimental result is averaged over 3 runs, each lasting until the measurement is stabilized. 

We ran various experiments, all with the largest possible fault tolerance (i.e., touching the 1/3 optimal resilience bound). We divide the experiments into four types: (i) throughput under a different number of replicas (up to 600) and varying batch sizes ($\S$\ref{exp:scalability}); (ii) the effectiveness of scaling up ($\S$\ref{exp:costeffectiveness}); (iii) breakdowns of consumptions across different components ($\S$\ref{exp:breakdown}); (iv) performance with Byzantine failures ($\S$\ref{exp:failure}). 
%
%
%
%
%
By default, the throughput is measured from the server's side
, and the latency is from the client's side. 

\subsection{Throughput on scaling}
\label{exp:scalability}
We examed the throughput of Leopard and HotStuff with an increasing number of replicas. The payload size is fixed at 128 bytes. We stress test the two systems with a saturated request rate and an increasing batch size to approach throughput's limit. 
Intuitively, increasing the batch size can better amortize the cost of confirmation and leads to higher throughput. Hence, we first measured throughput on varying batch sizes in both HotStuff and Leopard.  

\begin{figure}[!h]
	\centering{\includegraphics[width=6cm]{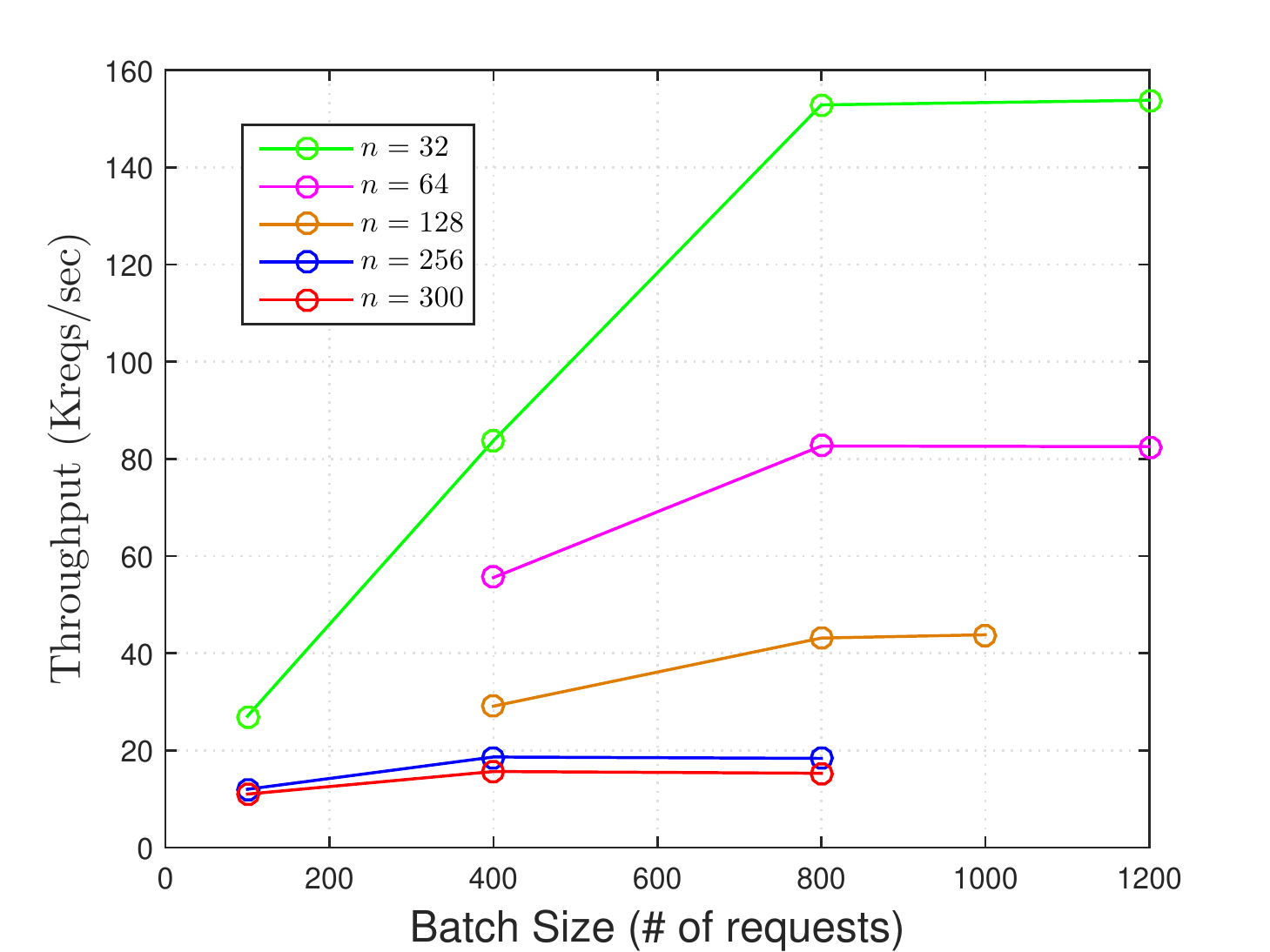}}
	\centering \caption{Throughput of HotStuff on varying batch sizes.}
	\label{fig:hotstuffpara}
\end{figure}

In HotStuff, there is only one batch for batching requests into a consensus proposal (i.e., a block). Fig. \ref{fig:hotstuffpara} depicts the throughput with different batch sizes for five replica numbers 32, 64, 128, 256, and 300. The result shows that throughput goes up as we increasing the batch size, but it stops to grow after a certain batch size value. 

\begin{figure}[ht]
	\centering		
	\includegraphics[width=0.35\textwidth,trim=0 0 0 0,clip]{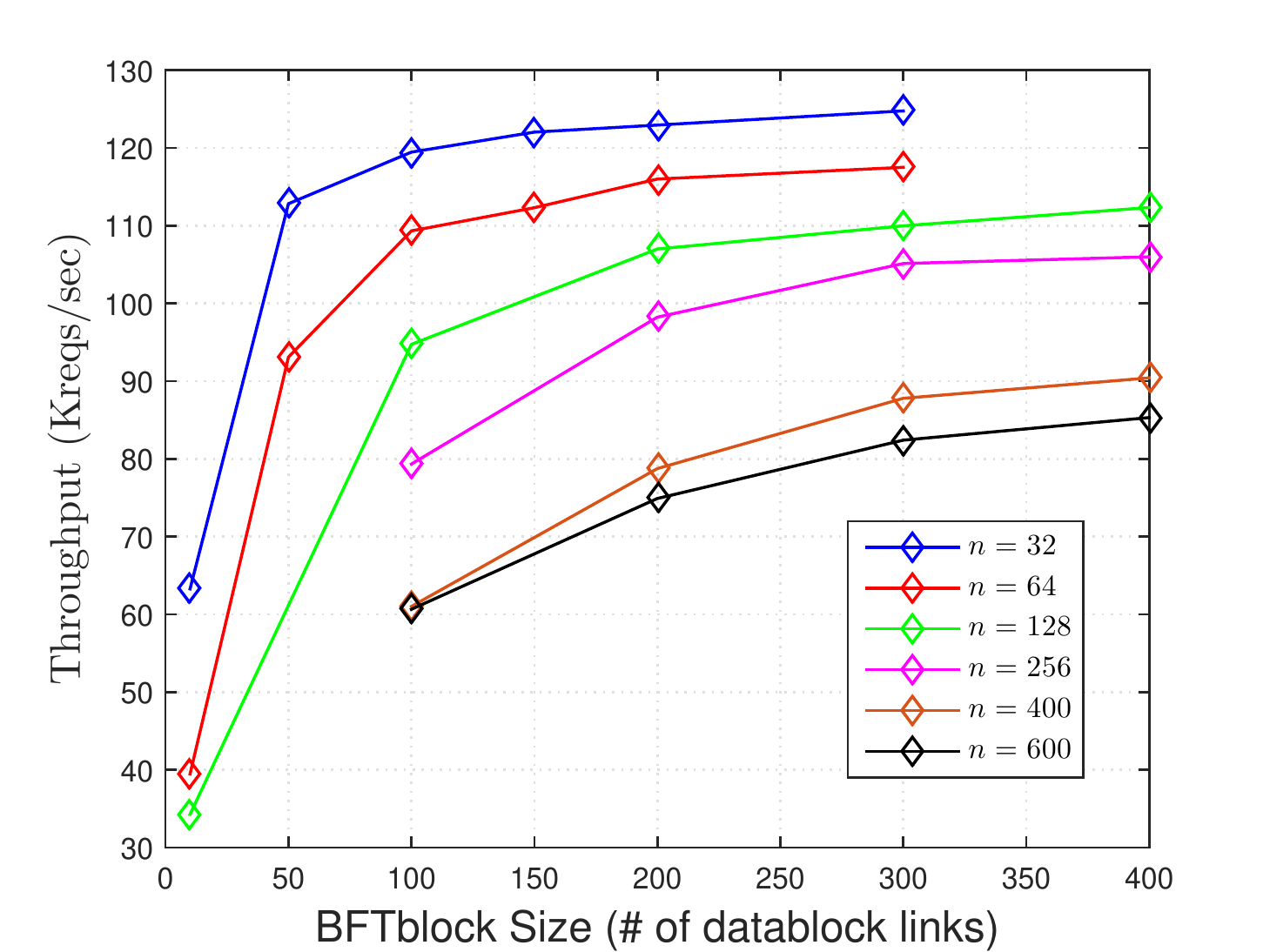}
	\caption{Throughput of Leopard on varying batch sizes of a BFTblock.}
	\label{fig:ourparabft}
\end{figure}

\begin{figure}
	\centering		
	\includegraphics[width=0.35\textwidth,trim=0 0 0 0,clip]{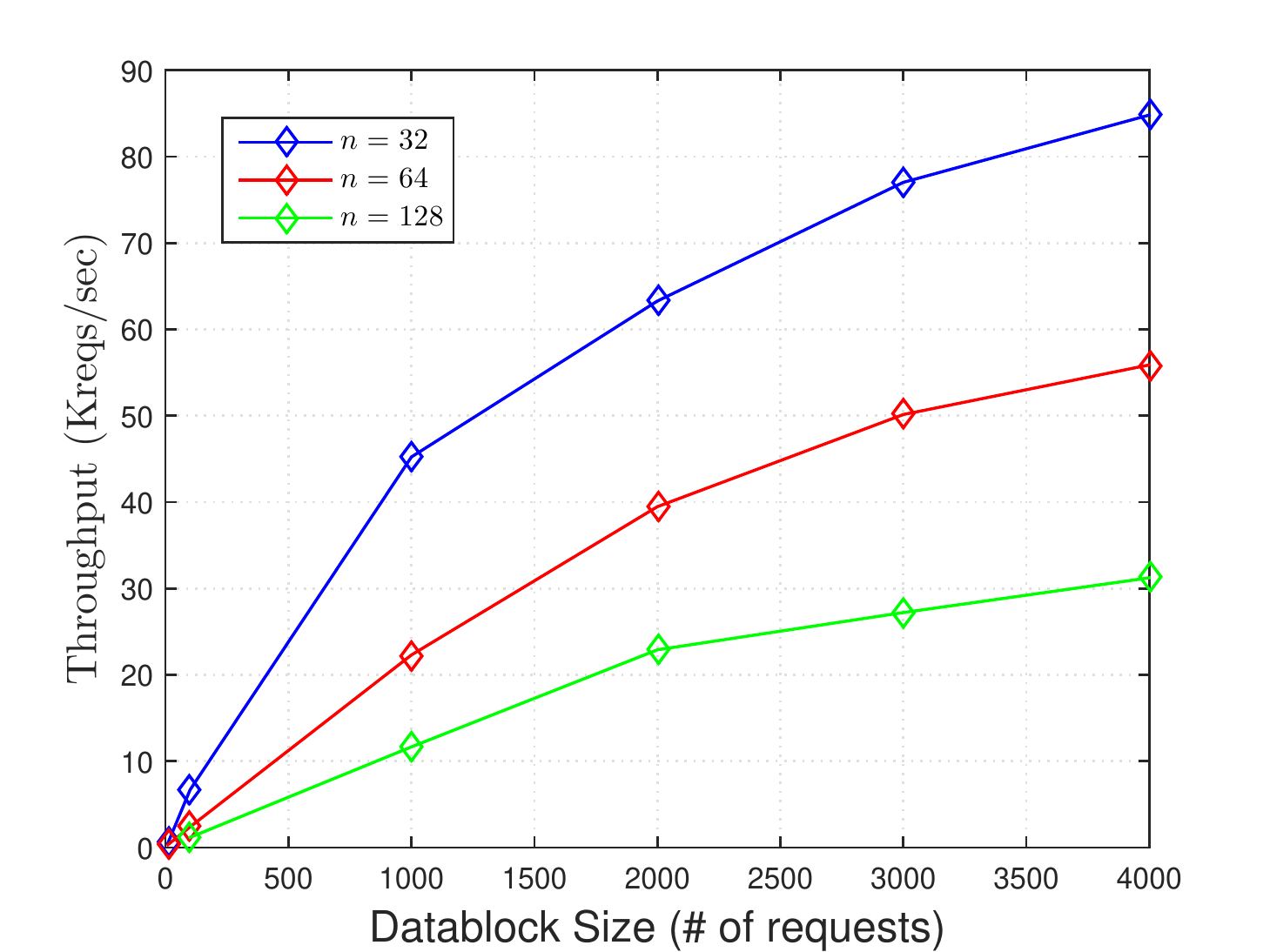}	
	\includegraphics[width=0.35\textwidth,trim=0 0 0 0,clip]{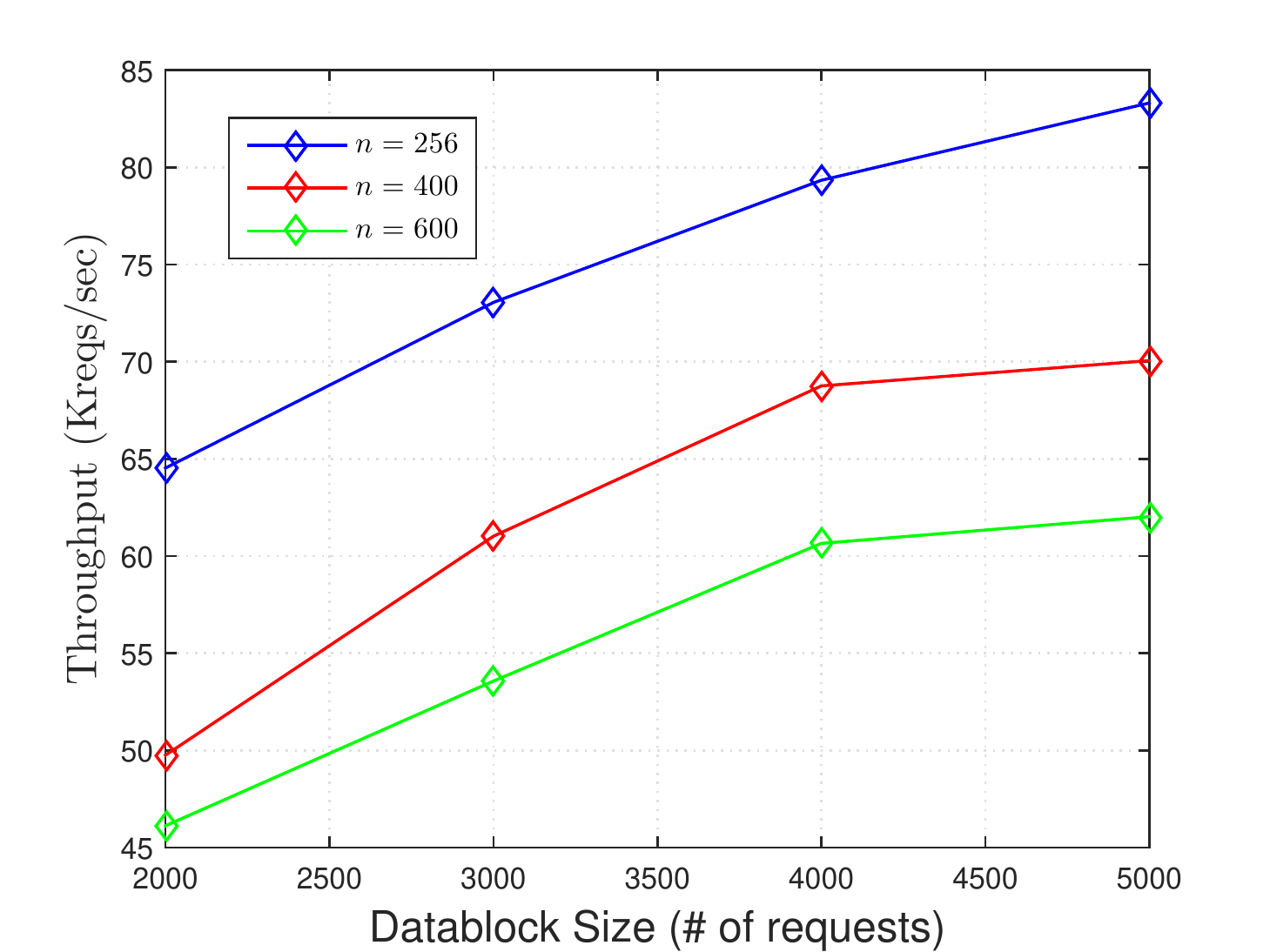}	
	\caption{Throughput of Leopard on varying datablock sizes with BFTblock size fixed at 10 (in the above) and 100 (in the below), respectively.}
	\label{fig:ourparatxb}
\end{figure}

In Leopard, there are two batch sizes, namely the BFTblock size and the datablock size. We varied one bath size while fixing the other one, to exam the effect on throughput. 
Fig. \ref{fig:ourparabft} depicts the throughput when the BFTblock size increases for six replica numbers (32 to 600). It shows that, for all tested $n$, the upward trend of throughput has gradually stabilized as the BFTblock size increases. Generally, the value of the BFTblock size, at which the throughput gets stabilized when $n$ is large, is greater than the value when $n$ is small. It indicates that it requires a larger batch size in each agreement instance to amortize the increased cost caused by a larger $n$. 
To exam the throughput when the datablock size increases, we chose two BFTblock sizes and see the result at different $n$. The result is shown in Fig. \ref{fig:ourparatxb}. Similar to the result in Fig. \ref{fig:ourparabft}, it shows that for both BFTblock sizes and all tested $n$, the upward trend of throughput has also gradually stabilized as the datablock size increases. (Although we doesn't show the latency in this section, we find out that the latency keeps going up as we increasing the batch sizes in both systems.)


We fine-tune the batch sizes to better accommodate both throughput and latency, and Table \ref{table:para} presents the batch sizes we used in this section.  
Fig. \ref{fig:scal} depicts throughput as $n$ grows 
in both Leopard and HotStuff. Since the implementation of HotStuff can hardly work when $n>300$, it only shows the result with $n$ up to 300. The throughput of Leopard drops only slightly and remains almost flat when $n$ varies up to 600, whereas HotStuff suffers from a significant throughput drop when its scale increases. When $n=300$, Leopard achieves a $5\times$ throughput compared to HotStuff, and the gap gets wider when $n$ further grows. 

\renewcommand{\arraystretch}{1.2}
\begin{table}[!h]
	\centering
	\caption{Implementation parameters of batch sizes}
	\vspace{-1mm}
	\footnotesize{
		\begin{tabular}{|c|c|c|c|}	
			\hline
			\multirow{2}{*}{$n$}& \multicolumn{2}{c|}{Leopard} & \multirow{2}{*}{HotStuff} \\
			\cline{2-3}
			& Datablock & BFTblock & \\
			\hline
			32&2000&100 & \multirow{5}{*}{800}\\
			\cline{1-3}
			64 & 2000 & 100 &\\
			\cline{1-3}
			128 & 3000 & 300&\\
			\cline{1-3}
			256 & 4000 & 300&\\
			\cline{1-3}
			300&-- &--&\\
			\hline
			400&4000&400&\multirow{2}{*}{--}\\
			\cline{1-3}
			600&4000&400&\\
			\hline
		\end{tabular}
	}
	\label{table:para}
\end{table}

\begin{figure}[!t]
	\vspace{-3mm}
	\centering
	\includegraphics[width=0.35\textwidth,trim=0 0 0 0,clip]{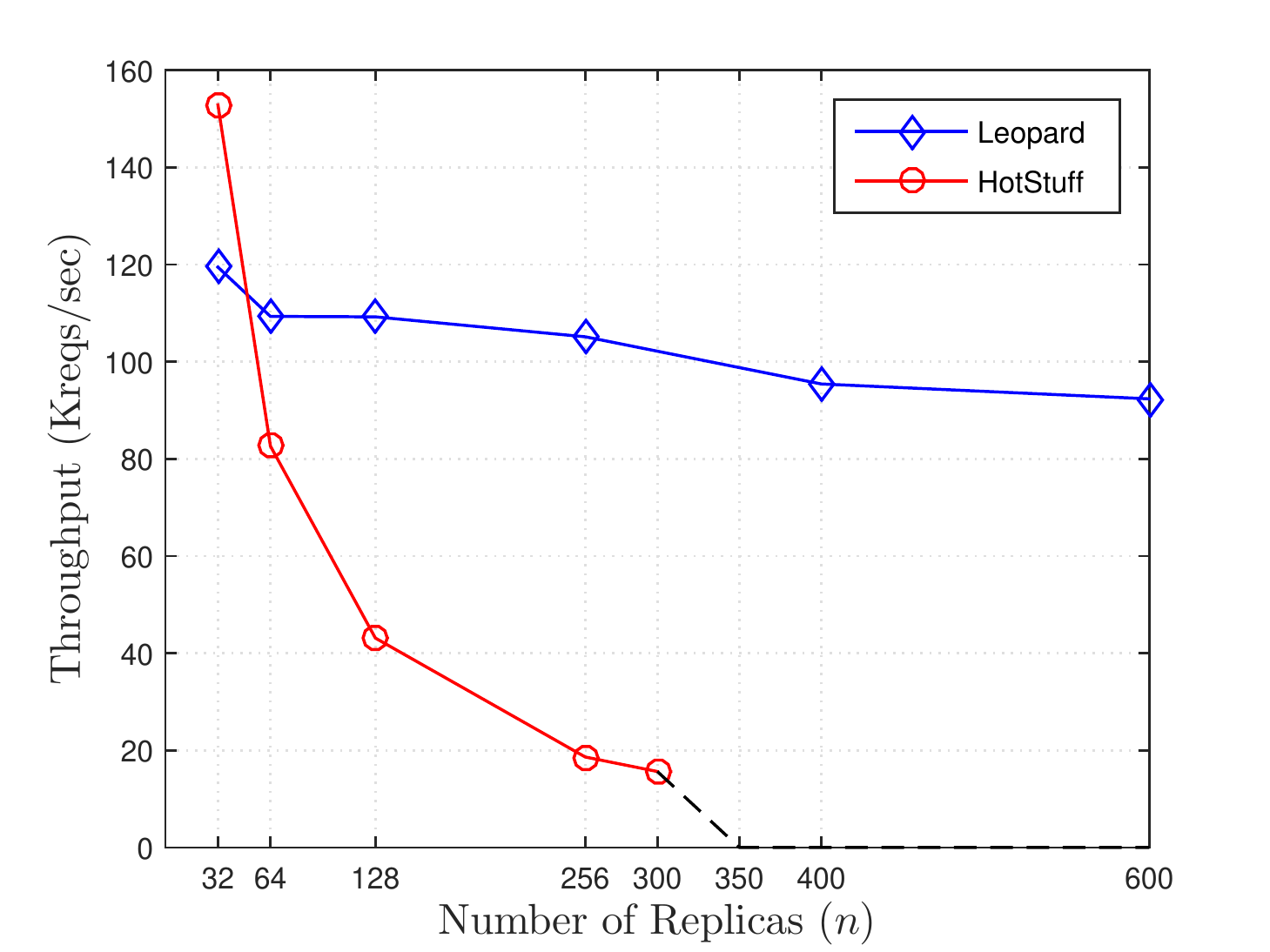}
	\vspace{-2mm}
	\caption{Throughput of Leopard and HotStuff at different scales.}
	\vspace{-4mm}
	\label{fig:scal}
\end{figure}

\begin{figure*}
	\vspace{-3mm}
	\centering
	\includegraphics[width=0.35\textwidth,trim=0 0 0 0,clip]{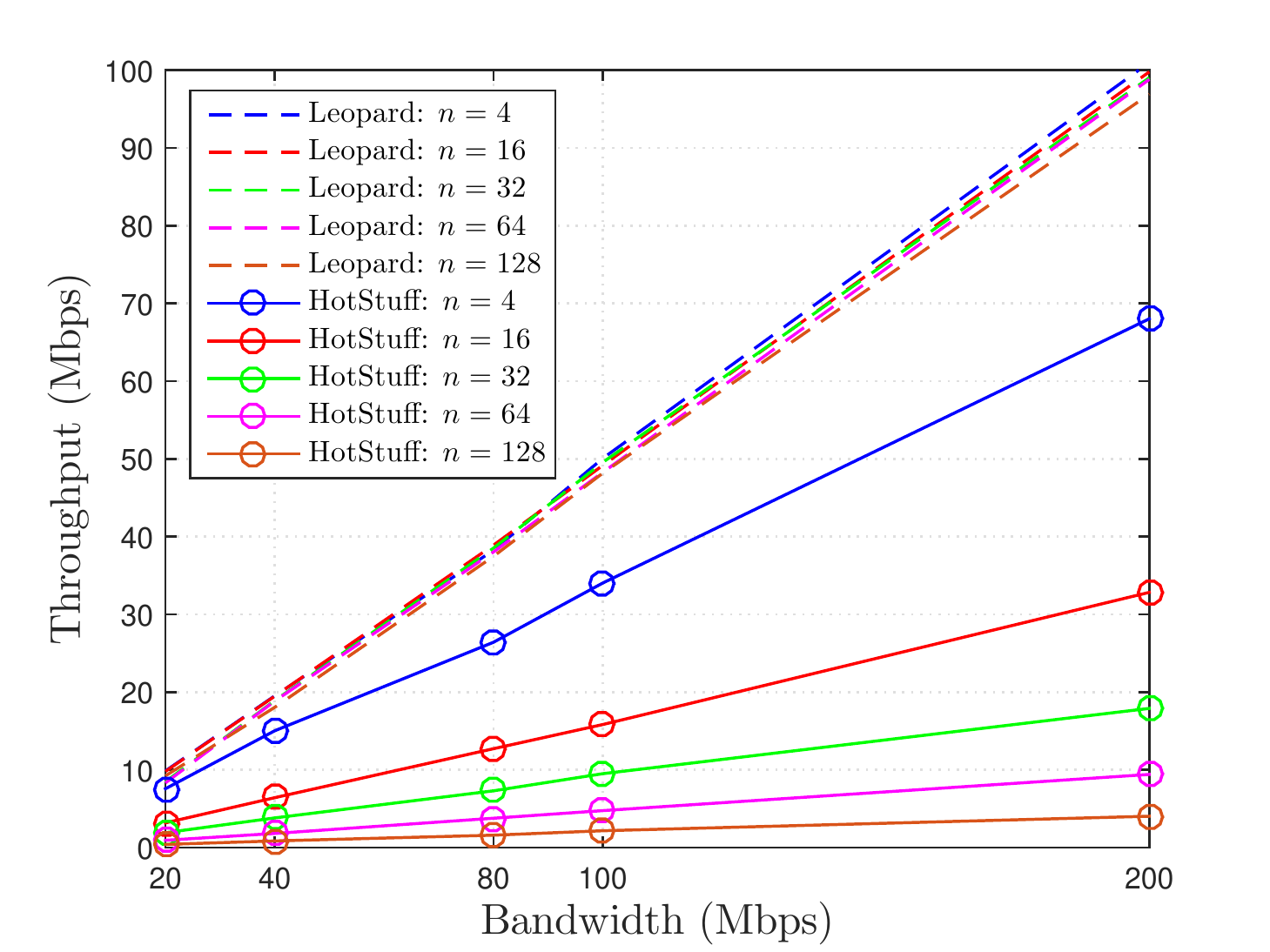}
	\includegraphics[width=0.35\textwidth,trim=0 0 0 0,clip]{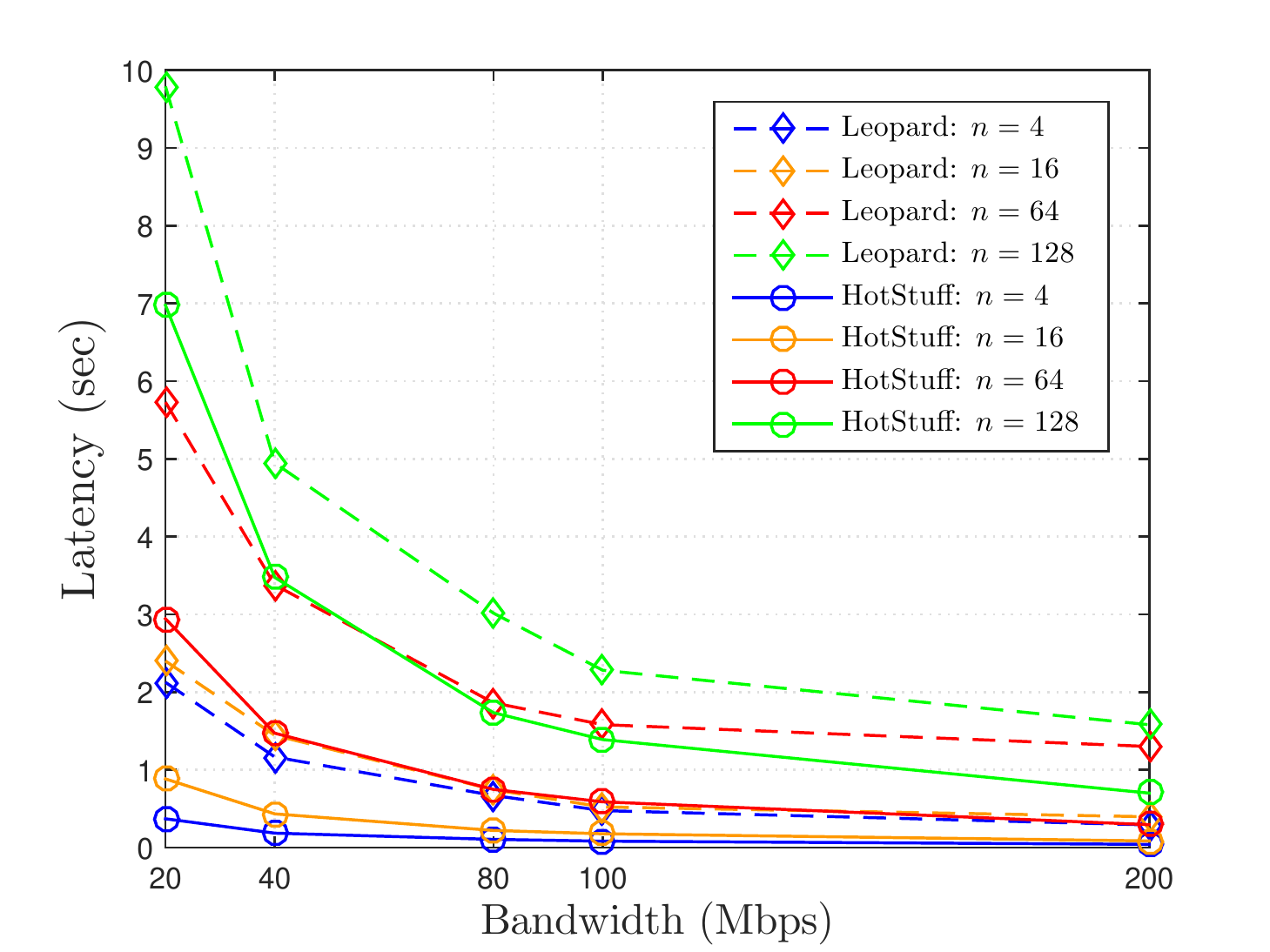}
	\vspace{-2mm}
	\caption{Throughput and latency with different available bandwidths at each replica in Leopard and HotStuff.}
	\vspace{-4mm}
	\label{fig:cost-effectiveness}
\end{figure*}

\subsection{Effectiveness of scaling up}
\label{exp:costeffectiveness}
We examed the effectiveness of the common method in distributed systems to improve throughput by scaling up. We evaluated how HotStuff performs with an increasing available bandwidth at each replica. This is to verify our analysis that the throughput's drop in leader-based BFT protocols that using the leader to disseminate requests cannot be effectively solved by just configuring with more resources. As a comparison, we evaluated the performance of scaling up in Leopard with different $n$.

We throttled the available bandwidth at each replica from 20 to 200 Mbps using NetEm \cite{NetEm}, and fixed the batch sizes in both two systems. 
Fig. \ref{fig:cost-effectiveness} depicts throughput and latency under different available bandwidths for both Leopard and HotStuff with a varying $n$. 
It shows that the throughput of both systems can be improved by configuring with more bandwidth resources, and it grows linearly with the increase of available bandwidth for all tested scales. In HotStuff, the increased throughput over the increased bandwidth, however, approaches 0 as $n$ gets larger, which indicates that throughput cannot be improved effectively by this method when the scale is large. 
In contrast, the improved throughput in Leopard remains at about 1/2 for all tested scales. 
The slight decrease of throughput in Leopard when $n$ varies from 4 to 128 is due to that we fixed the datablock size. But at all events, the effectiveness of scaling up on throughput in Leopard is significantly superior to HotStuff. 

A larger available bandwidth comes to lower latency for all tested scales in both systems. The latency of Leopard is higher than that of HotStuff in the same configuration. This is due to that 
replicas in Leopard wait for more requests' arrivals through receiving datablocks of size $O(n)$ before initiating a new agreement.
The latency gap, however, narrows down when adding a higher available bandwidth to reduce the time spent at request delivery. 

\renewcommand{\arraystretch}{1.2}
\begin{table}[!h]
	\vspace{-3mm}
	\centering
	\caption{Bandwidth utilization breakdown of Leopard}
	\vspace{-1mm}
	\footnotesize{
		\begin{tabular}{|c|c|l||c|}	
			\hline
			Role & \multicolumn{2}{c||}{Utilization} & \%Bandwidth\\
			\hline\hline
			\multirow{8}{*}{Leader}&\multirow{4}{*}{Send}&BFTblock& 3.17\%\\
			&& Proof& 0.24\%\\
			&&Miscellaneous & 0.15\%\\
			\cline{3-4}
			&&SUM & \textbf{3.56\%}\\
			\cline{2-4}
			&\multirow{4}{*}{Receive}&Datablock& 96.17\%\\
			&&Vote & 0.16\%\\
			&&Miscellaneous& 0.10\%\\
			\cline{3-4}
			&&SUM &\textbf{96.44\%}\\
			\hline			
			\multirow{9}{*}{Non-leader}&\multirow{4}{*}{Send}&Datablock& 49.93\%\\
			&& Vote & 0.01\%\\
			&&Miscellaneous& 0.05\%\\
			\cline{3-4}
			&&SUM & \textbf{49.99\%}\\
			\cline{2-4}
			&\multirow{5}{*}{Receive}&BFTblock& 0.05\%\\
			Replica&&Datablock & 48.34\%\\
			&&Proof& 0.01\%\\
			&&Reqs. from Clients& 1.61\%\\
			\cline{3-4}
			&&SUM & \textbf{50.01\%}\\
			\hline
		\end{tabular}
	}
\vspace{-1mm}
	\label{table:bandwidth-breakdown}
\end{table}

\subsection{Breakdowns}
\label{exp:breakdown}

\subsubsection{Bandwidth utilization}
To understand the bandwidth that is being utilized, so it can effectively improve protocol and implementation to achieve the best possible performance out of the network, 
we evaluated the bandwidth utilization breakdown at the leader and the non-leader replica in Leopard. Table \ref{table:bandwidth-breakdown} presents the evaluation result with $n=32$. It shows that, most of the bandwidth at the leader (over 96\%) is utilized on dealing with datablocks, whereas the cost of processing BFTblocks is only less than 4\%. 
For each non-leader replica, most of the bandwidth is utilized on processing datablocks, similar to the leader's side. The cost of sending datablocks (49.93\%) is a little bit higher than receiving (48.34\%). This is because a non-leader replica sends a datablock to $n-1$ other replicas, but it only receives $n-2$ replicas' datablocks. 
Moreover, the bandwidth used on processing votes is only less than 1\%. 
This implies that measuring the communication cost only on the vote phase is not enough for evaluating the efficiency of the protocol in the high throughput setting.


\subsubsection{Latency}
To understand the reason for the latency gap between Leopard and HotStuff shown in Fig. \ref{fig:cost-effectiveness}, we measured the time cost across different components in our Leopard’s implementation and provides a latency breakdown. The result is presented in Table \ref{table:latency-breakdown}. It shows that, our Leopard’s implementation costs over 50\% of latency on the datablock dissemination. Since this component only takes up 1/8 fraction of total rounds during an agreement (see Fig. \ref{fig:message-flow}), it indicates that engineering optimizations like more efficient data serialization and underlying communication mechanisms could further improve the performance, which we leave as future work. 

\renewcommand{\arraystretch}{1.2}
\begin{table}[h]
	\centering
	\caption{Latency breakdown of Leopard with $n=32$.}
	\vspace{-1mm}
	\small{
		\begin{tabular}{|c|l||c|}	
			\hline
			\multicolumn{2}{|c||}{Usage} & \%Latency\\
			\hline\hline
			\multirow{3}{*}{Datablock Preparation}&Datablock Generation& 12.98\%\\
			& Datablock Dissemination & 50.30\%\\
			\cline{2-3}
			&SUM & \textbf{63.28\%}\\
			\hline		
			\multicolumn{2}{|c||}{Agreement} & \textbf{35.97\%}\\
			\hline		
			\multicolumn{2}{|c||}{Response to the Client} & \textbf{0.75\%}\\
			\hline		
		\end{tabular}
	}
	\label{table:latency-breakdown}
\end{table}

\subsubsection{Bandwidth usage of the leader}
Since the leader's overhead is the major bottleneck in existing leader-based BFT protocols, we further evaluated the leader’s bandwidth utilization for both systems. The result is depicted in Fig. \ref{fig:leader-overhead}. It shows that the leader’s bandwidth in HotStuff rises rapidly from 1Gbps to 2.5Gbps when $n$ grows from 4 to 300 (at the same time, the throughput decreases dramatically). In Leopard, this has been greatly reduced to lower than 0.5Gbps for all tested scales. 

\begin{figure}[!h]
	\centering
	\includegraphics[width=0.35\textwidth,trim=0 0 0 0,clip]{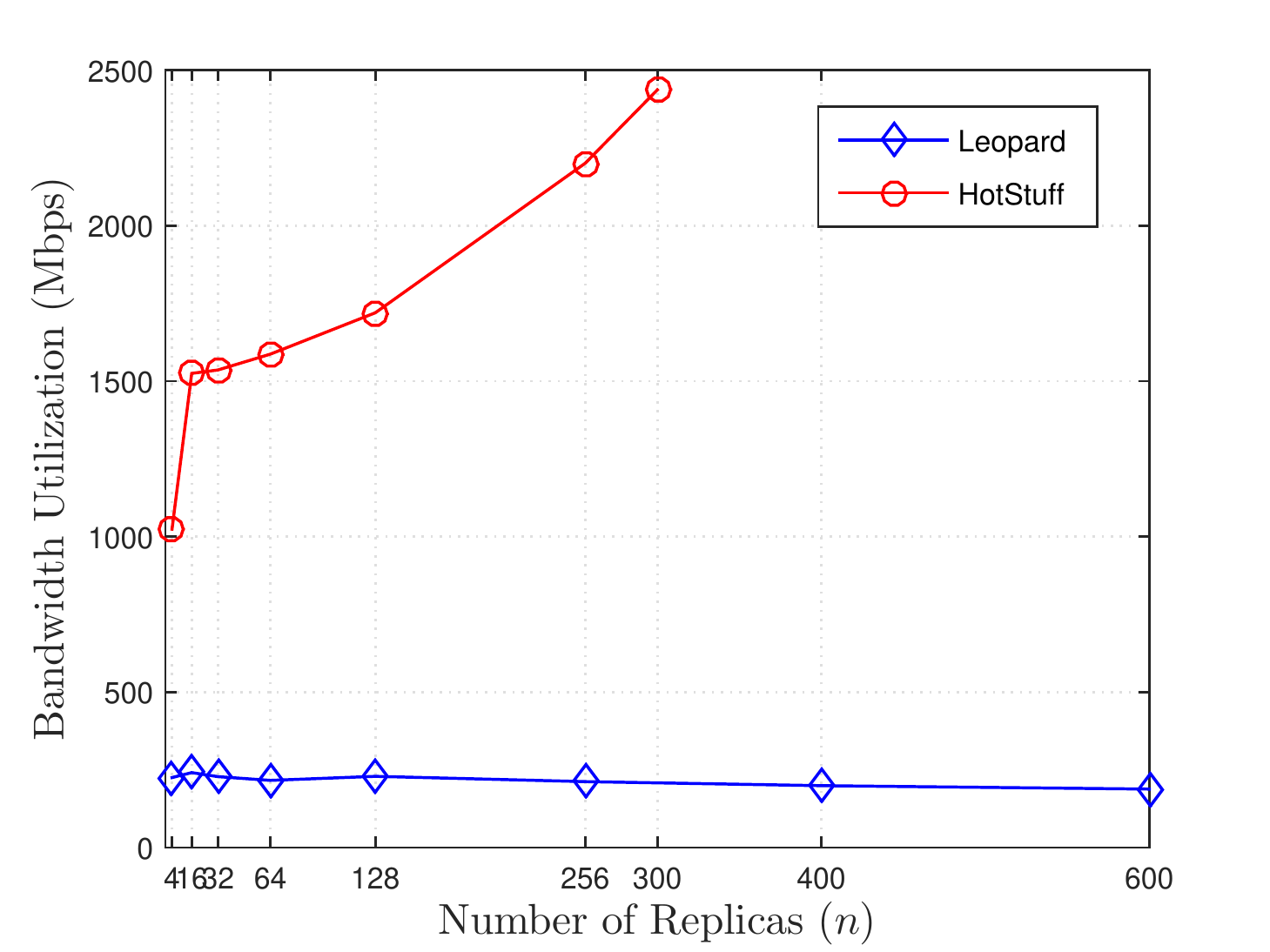}
	\vspace{-2mm}
	\caption{Bandwidth usage of the leader in Leopard and HotStuff.}
	\vspace{-2mm}
	\label{fig:leader-overhead}
\end{figure}

\subsection{Performance with Byzantine failures}
\label{exp:failure}

\subsubsection{Retrieving a missing datablock}
\begin{figure}
	\centering
	\includegraphics[width=0.35\textwidth,trim=0 0 0 0,clip]{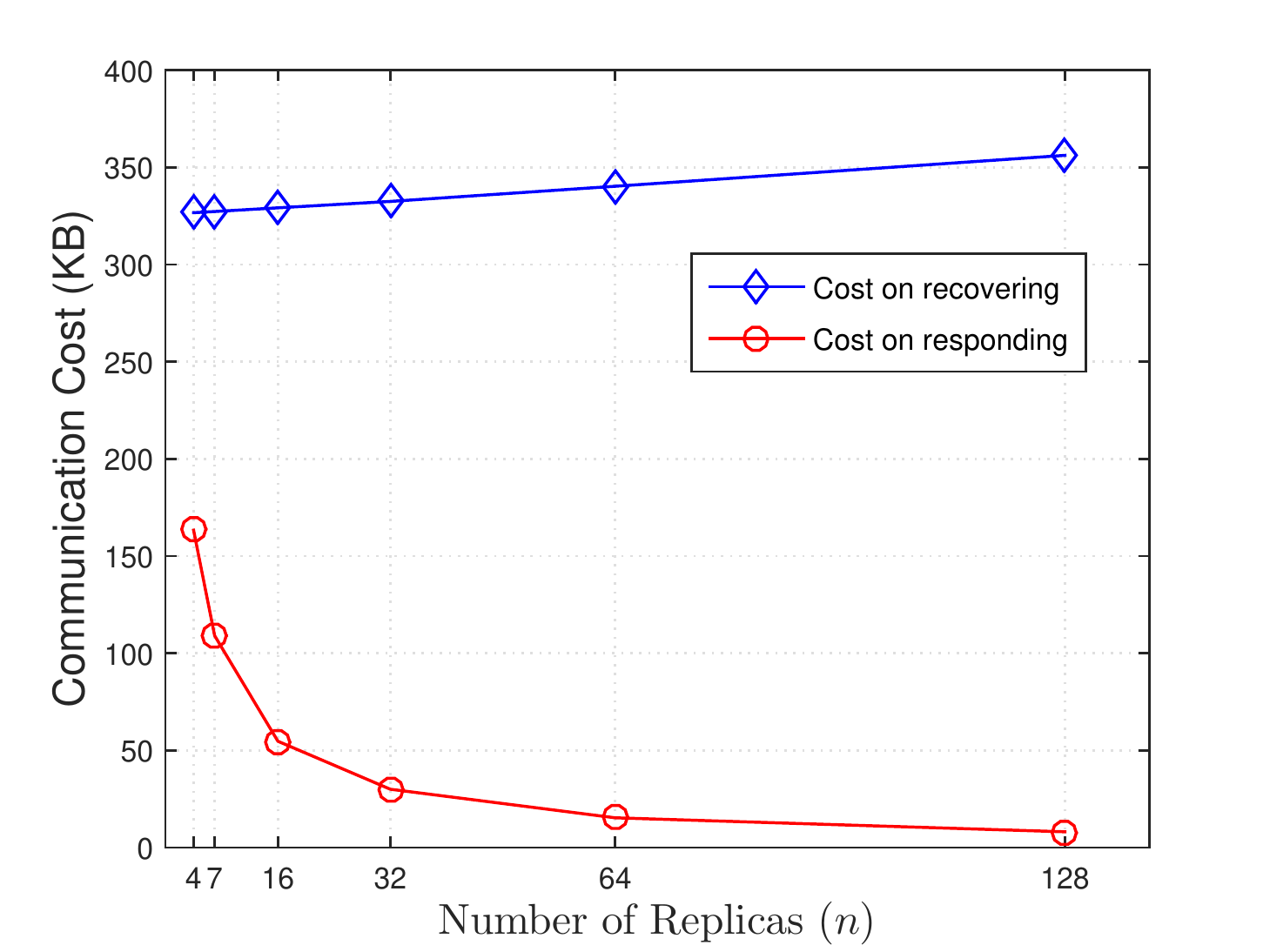}
	\caption{Communication cost with different numbers of replicas in Leopard.}
	\vspace{-2mm}
	\label{fig:retrievetest}
\end{figure}

\begin{table}
	\centering
	\caption{Time cost on retrieving a datablock at different $n$.}
	\small{
		\begin{tabular}{|c|cccccc|}	
			\hline
			$n$ & 4 & 7 & 16 & 32 & 64 & 128 \\
			\hline
			Time cost (ms) & 35 & 40 & 39 & 47 & 53 & 130 \\			
			\hline		
		\end{tabular}
	}
	\label{table:retreive-time-cost}
\end{table}

To test the performance in the face of failures, we evaluated the communication and time costs during the retrieval and the view-change. 
To retrieve a datablock containing 2000 requests (with 128-byte payload size), the communication cost on retrieving is depicted in Fig. \ref{fig:retrievetest}, and the time cost is presented in Table \ref{table:retreive-time-cost}. It shows that the cost on recovering a datablock is only slightly increased from 325KB to 356KB when $n$ grows from 4 to 128, and the cost of responding at a replica is dramatically reduced from 163KB to 8KB. This is due to the usage of erasure codes to amortize the cost among replicas. The time cost is increased, though, it only takes less than 130 ms when $n$ is 128.

\subsubsection{View-change}
\label{exp:viewchange}
It is generally conjectured that the heavy view-change may hinder the support for large-scale replicas. Meanwhile, a Byzantine leader can obstruct the confirmation of request and decreases the efficiency of the system by letting the protocol switch to the view-change mode. We thus measured the performance of the view-change in Leopard with up to 400 replicas to evaluate its time and communication costs. The result is measured after a view-change has been triggered since the expired time for triggering a view-change may vary in different applications. In our evaluation, we randomly stop the leader to trigger a view-change. 

Fig. \ref{fig:vc} depicts the results. It shows that when the protocol's scale increases, both the time and total communication cost increase. However, when the protocol scale goes up to hundreds (400), the time cost of a view-change can be still in seconds (less than 6s).
The total communication cost when $n=400$ is less than 100MB, and most of the communication cost is spent at the leader (for sending a new-view message of size $O(n)$ to all the other replicas). In addition, since the view-change is triggered randomly in this experiment, the number of outstanding BFTblocks varies and this leads to a varied evaluation result in different runs.


We note that the cost on a view-change is mainly for processing BFTblocks that are not included in the latest checkpoint. Although the view-change protocol of Leopard is mostly based on PBFT, the number of outstanding BFTblocks could be small since the number of requests linked by a BFTblock in Leopard is large. This helps to reduce the cost of a view-change. 
Also, since the view-change only happens occasionally, the above evaluation result indicates that Leopard can work well when the scale of the system is in hundreds. 

\begin{figure}
	\centering
	\includegraphics[width=0.35\textwidth,trim=0 0 0 0,clip]{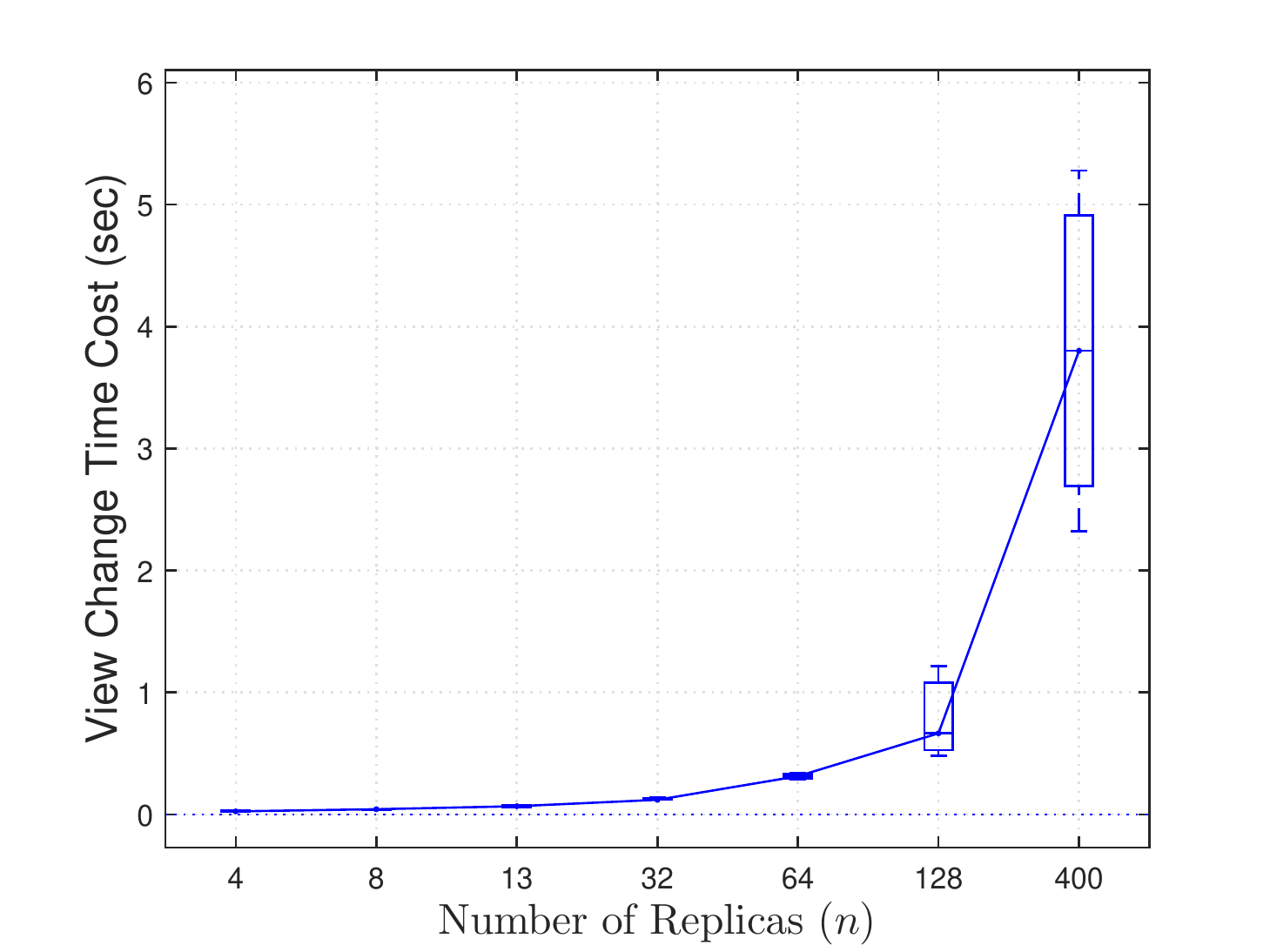}
	\includegraphics[width=0.35\textwidth,trim=0 0 0 0,clip]{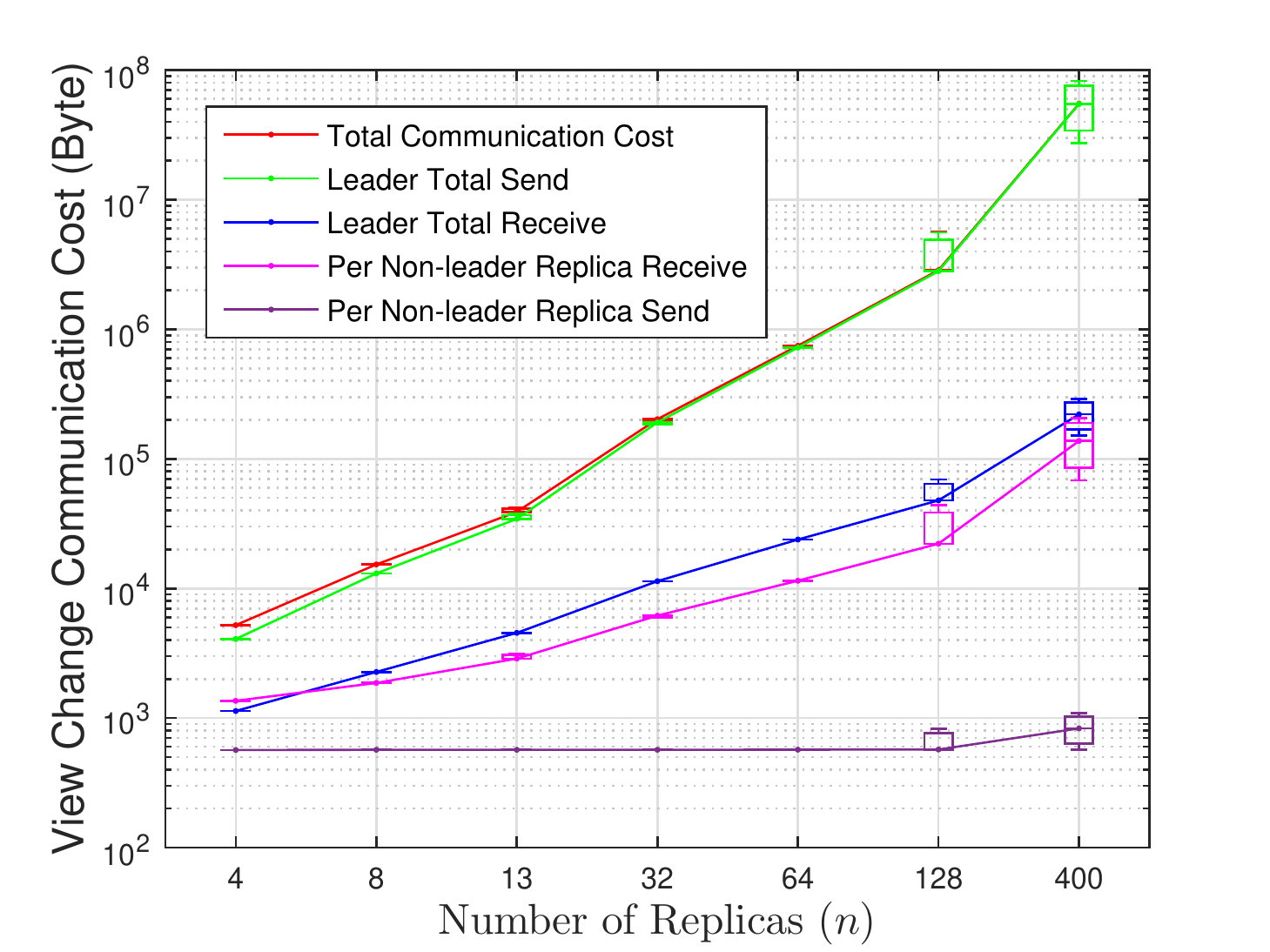}
	\vspace{-2mm}
	\caption{View-change time and communication costs of our Leopard's implementation.}
	\vspace{-4mm}
	\label{fig:vc}
\end{figure}

\bibliographystyle{IEEEtranS}
\bibliography{reference_all}

\appendices
\section{Garbage collection and view-change}
\label{appendix:detailprotocol}

\subsection{Garbage collection} 
To reduce the storage overhead of replicas, a.k.a., the garbage collection, it would be better if we could remove all executed requests from the buffer. 
We periodically (every $k/2$, where $k$ is the maximal number of parallel-executed BFTblocks in Algorithm \ref{alg:normal-case}) invoke a checkpoint protocol on the latest executed BFTblock, denoted as $B$. The generation of a valid checkpoint is a one-round voting process. After that, the requests linked by every BFTblock with the serial number lower than $B$'s can be removed from the buffer. Meanwhile, it advances the lower watermark $lw$ in Algorithm \ref{alg:normal-case} to the serial number of $B$. The checkpoint protocol is presented in Algorithm \ref{alg:checkpoint}. 
\begin{algorithm}[!h]
	\caption{Making a checkpoint}
	\small{
		\begin{algorithmic}[1] 
			\State (\textbf{for} replica $i,i\in[n]$)
			\State \quad let $sn$ be the serial number of the latest executed BFTblock and $st$ be the related execution state
			\State \quad \textbf{if} $(sn\mod k/2=0)$ \textbf{then} \cmd{produce a checkpoint}
			\State\qquad$cp\leftarrow\langle\texttt{checkpoint},sn,\mathsf{H}(st)\rangle$ 
			\State\qquad $\hat{\sigma}_{i}\leftarrow\mathsf{TSig}(tsk_i,cp)$
			\State \qquad send $(cp,\hat{\sigma}_{i})$ to $L_v$
			
			\vspace{1mm}
			\State (\textbf{for} leader $L_v$)
			\State \quad wait for $2f+1$ valid checkpoint messages
			\State\quad $\hat{\sigma}\gets\mathsf{TSR}(\{\hat{\sigma}_j\}_{j=1}^{2f+1})$ \cmd{create a proof for $cp$}			
			\State \quad multicast $(cp,\hat{\sigma})$ to all replicas
			
			\vspace{1mm}
			\State (\textbf{for} replica $i,i\in[n]$)
			\State\textbf{if} {$\mathsf{TVrf}(tpk,\hat{\sigma},cp)\rightarrow 1$} \textbf{then}
			\State \quad update the maintained latest checkpoint message $lc$ to $(cp,\hat{\sigma})$
			\State\quad$lw\leftarrow sn$ \cmd{advance the lower water mark $lw$}
		\end{algorithmic}
	}
	\label{alg:checkpoint}
\end{algorithm}

Like in Algorithm \ref{alg:normal-case}, we use the threshold signature to reduce the overall communication cost. For every replica, once an executed BFTblock has a serial number that is multiple of $k/2$, the replica generates a checkpoint message and sends it to the leader. The leader collects $2f+1$ checkpoint messages to form a proof for the checkpoint, and multicasts it to others. Each replica will update the latest confirmed checkpoint message and remove all executed requests linked by BFTblocks with a lower serial number than $sn$ from its buffer. The lower watermark will also be advanced to $sn$, indicating no BFTblock with a lower serial number will be received later. 

\subsection{View-change} 
The view-change mechanism aims to replace Byzantine leaders and reboot the normal-case mode at a new leader. 
The protocol contains three steps: Trigger a view-change, rotate the leader, and synchronize the state. View-change trigger decides when to invoke the view-change protocol; Leader rotation appoints a new leader; State synchronization synchronizes the state of the current view among replicas to avoid any safety-violation. 

We adopt a simple round-robin policy as in \cite{PaLa18,SBFT19,Ouroboros-BFT18} for the leader election, where the ($v$ mod $n$)-th replica be the eligible leader of view $v$. The view-change trigger and state synchronization steps in Leopard are similar to PBFT \cite{PBFT99}, which we show in the below:

\textit{- View-change trigger:} 
A view-change is triggered by a time-out request. Once the client finds out a request that has not received an acknowledgement for a long period of time, the client selects replicas following the protocol in Algorithm \ref{alg:req-assign}. It re-sends this request, together with a special tag indicating it is a time-out request, to the selected replicas, which are then disseminated to other replica through datablocks. Each replica sets a time on receiving a time-out request. The replica $i$ triggers a view change either (1) a timer expires and no acknowledgment of the request have been sent, and in this case, $i$ multicasts a timeout message $\langle\texttt{timeout},v\rangle$, together with $i$'s signature on it, to all replicas, or (2) $i$ receives a proof that the leader is faulty either via a publicly verifiable contradiction or when received $f+1$ timeout messages from other replicas. 
The replica then stops executing the protocols for agreeing BFTblocks (Algorithm \ref{alg:normal-case}) and making checkpoints (Algorithm \ref{alg:checkpoint}). 
As in PBFT, it requires that the timer for triggering a view-change should be set appropriately. This is to avoid switching to a new view too frequently considering a message delivery delay exists even if the network is synchronous.

\textit{- State synchronization:} 
During a view-change, each replica $i$ collects every notarized or confirmed BFTblock whose serial number is higher than $lw$. Replica $i$ sends a view-change message $m=\langle\texttt{view-change},v+1,lc,\mathcal{B}\rangle$, together with $i$'s signature on $m$, to the leader of the next view $v+1$. $\mathcal{B}$ is a set containing hashes of collected BFTblocks and their notarization proofs. 
The next view leader $L_{v+1}$ waits for $2f+1$ valid view-change messages. It then multicasts a new-view message $\hat{m}=\langle\texttt{new-view},v+1,\mathcal{V}\rangle$, together with $L_{v+1}$'s signature on $\hat{m}$, to all replicas, where $\mathcal{V}$ is the set of $2f+1$ view-change messages. 
When a replica receives a view-change message satisfying that it is signed by $L_{v+1}$ and contains $2f+1$ valid new-view messages, the replica advances to view $v+1$. Replicas restart the normal-case mode under view $v+1$ by first redoing the agreement for each BFTblock contained in $\mathcal{V}$ without re-executing them. The empty position between BFTblocks is filled by a dummy BFTblock with empty content.

\section{Safety and Liveness}
\label{appendix:security} 
The safety of the protocol is captured in Theorem \ref{theo:safety2}.
We first give and prove Lemma \ref{lem:uniqueness} and Lemma \ref{lem:safety} in the below. 

\begin{lemma}[Uniqueness]
	Assume the threshold signature scheme is secure. If two BFTblocks $B_1$ and $B_2$ both with the same serial number are notarized in the same view, then it must be the case that $B_1=B_2$.
	\vspace{-2mm}
	\label{lem:uniqueness}
\end{lemma}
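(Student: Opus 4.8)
The plan is to run the standard quorum-intersection argument, mirroring the safety proof of PBFT. First I would unfold the definition of ``notarized'': by Algorithm~\ref{alg:normal-case}, $B_1$ being notarized in view $v$ means a valid notarization proof $\hat\sigma_1^1$ on $\mathsf{H}(B_1)$ exists, and by the description of $\mathsf{TSR}$ such a proof can only be produced by combining $2f+1$ valid first-round signature shares on $\mathsf{H}(B_1)$ from distinct replicas. Invoking unforgeability of the $(2f+1,n)$-threshold signature scheme together with the bound of at most $f$ Byzantine replicas (and $n=3f+1$), the set $Q_1$ of these $2f+1$ signers must contain at least $f+1$ honest replicas; write $H_1\subseteq Q_1$ for the honest ones, so $|H_1|\ge f+1$. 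Symmetrically, $B_2$ notarized in view $v$ yields an honest set $H_2$, $|H_2|\ge f+1$, of replicas that produced first-round shares on $\mathsf{H}(B_2)$ in view $v$.

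Next I would apply the pigeonhole step over the honest replicas. There are exactly $n-f=2f+1$ honest replicas, hence $|H_1\cap H_2|\ge |H_1|+|H_2|-(2f+1)\ge 1$. So there is an honest replica $r$ that, during the prepare stage of view $v$, produced first-round signature shares both for $\mathsf{H}(B_1)$ and for $\mathsf{H}(B_2)$, where by hypothesis $B_1$ and $B_2$ carry the same serial number $sn$.

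Finally I would derive $B_1=B_2$ from the behaviour of honest replicas. Suppose for contradiction that $B_1\ne B_2$. Collision resistance of $\mathsf{H}$ gives $\mathsf{H}(B_1)\ne\mathsf{H}(B_2)$, so $r$'s two shares are votes on genuinely distinct BFTblocks. A replica's current view is non-decreasing and both votes were cast while $r$'s current view equals $v$; hence, when $r$ cast the later of the two votes, the check in $\textsc{VrfBFTblock}$ (``some other BFTblock with the same $sn$ has been voted in view $v$'') would have triggered, returning $0$ and preventing $r$ from voting --- contradicting $r\in H_1\cap H_2$. Therefore $B_1=B_2$.

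I expect the argument to be routine; the only places that need care are (i) attributing a valid notarization proof to $2f+1$ genuine signers via threshold unforgeability, so that the honest count in each quorum is $\ge f+1$ rather than merely ``the quorum has size $2f+1$'', and (ii) the small bridge from $\mathsf{H}(B_1)=\mathsf{H}(B_2)$ back to $B_1=B_2$ via collision resistance, so that the per-$(sn,v)$ single-vote rule of honest replicas actually applies.
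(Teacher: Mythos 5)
Your proof is correct and follows essentially the same route as the paper's: extract a quorum of $2f+1$ signers for each notarized block via threshold-signature unforgeability, intersect to find a common honest replica, and conclude from the rule that an honest replica votes for at most one BFTblock per serial number per view. The extra care you take with the collision-resistance bridge from $\mathsf{H}(B_1)\neq\mathsf{H}(B_2)$ to $B_1\neq B_2$ is a detail the paper elides but does not change the argument.
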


\begin{proof}
	Since $B_1$ is notarized and by the definition of the protocol, there must be a set $R$ containing $2f+1$ distinct replicas who have signed $B_1$; otherwise, a reduction can be built to break the unforgeability of the threshold signature scheme. As there is at most $f$ Byzantine replicas, at least $f+1$ honest replicas are contained in $R$. Similarly, we have that at least $f+1$ honest replicas have signed $B_2$. Since there are $3f+1$ replicas in total and by the pigeon-hole principle, at least one honest replica has signed both $B_1$ and $B_2$. Due to that an honest replica will only sign one BFTblock for every serial number, it holds that $B_1=B_2$.
\end{proof}

\begin{lemma}
	If two honest replicas have two BFTblocks with the same serial number in their output logs, the two BFTblocks are the same.
	\label{lem:safety}
\end{lemma}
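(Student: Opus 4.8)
The plan is to reduce the statement to a claim about \emph{confirmed} BFTblocks and then establish cross-view consistency by induction on the view number, mirroring the safety proof of PBFT. First I would note that an honest replica only adds a BFTblock to its output log after that block becomes confirmed, so it suffices to show: if $B_1$ with serial number $sn$ is confirmed in view $v_1$ and $B_2$ with the same serial number $sn$ is confirmed in view $v_2$, then $B_1=B_2$. Assume without loss of generality that $v_1\le v_2$. If $v_1=v_2$, then both blocks are in particular notarized in the same view, and Lemma~\ref{lem:uniqueness} immediately gives $B_1=B_2$. The case $v_1<v_2$ is the substance of the argument.

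For $v_1<v_2$ I would prove, by induction on $v$ ranging over $v_1\le v\le v_2$, the invariant that \emph{every BFTblock with serial number $sn$ notarized in view $v$ equals $B_1$}. The base case $v=v_1$ follows from Lemma~\ref{lem:uniqueness}, since $B_1$ is notarized (being confirmed) in view $v_1$. Because $B_1$ is confirmed in view $v_1$, there is a set $Q$ of $2f+1$ distinct replicas that sent commit-stage votes for $B_1$; each of them therefore holds the notarization proof $\hat{\sigma}^1$ of $B_1$, and $Q$ contains at least $f+1$ honest replicas. For the inductive step into view $v$, I would use that a replica only enters view $v$ upon receiving a valid \texttt{new-view} message $\langle\texttt{new-view},v,\mathcal V\rangle$ whose $\mathcal V$ is a set of $2f+1$ valid \texttt{view-change} messages from distinct replicas. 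Since $|Q|\ge f+1$ and $(2f+1)+(f+1)>3f+1$, some honest replica $r\in Q$ contributed a \texttt{view-change} message to $\mathcal V$; as $r$ holds the notarization of $B_1$ at serial number $sn$ and an honest replica reports every notarized BFTblock with serial number above its low watermark, $\mathcal V$ carries a notarization for $B_1$ at $sn$. By the state-synchronization rule, every honest replica that advances to view $v$ redoes the agreement for the BFTblocks certified in $\mathcal V$, hence at serial number $sn$ it will sign only $B_1$ in view $v$. Since any BFTblock notarized in view $v$ needs $2f+1$ signatures and thus at least one honest signer, the only BFTblock that can be notarized at $sn$ in view $v$ is $B_1$, closing the induction. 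Finally $B_2$, being confirmed in view $v_2$, is notarized in view $v_2$, so the invariant at $v=v_2$ yields $B_2=B_1$.

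The main obstacle I expect is not the induction skeleton but the bookkeeping around the view-change and garbage collection. One has to argue carefully that the honest replica $r\in Q$ that appears in $\mathcal V$ genuinely forwards the notarization of $B_1$ — in particular that serial number $sn$ has not been discarded below $r$'s low watermark, which requires invoking the checkpoint mechanism to show that a confirmed-and-collected block is reflected in the stable checkpoint state that is carried forward — and that the certified set in the \texttt{new-view} message forces honest replicas in view $v$ to re-propose and accept exactly $B_1$ at $sn$, including the handling of empty slots via dummy BFTblocks. Making the quorum-intersection argument robust across a chain of intermediate views, so that the notarization of $B_1$ survives each successive view-change, is the step that needs the most care.
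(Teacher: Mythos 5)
Your proposal is correct and follows essentially the same route as the paper's proof: Lemma~\ref{lem:uniqueness} handles the same-view case, and the cross-view case rests on intersecting the $2f+1$-replica confirmation quorum (which holds the notarization of $B_1$) with the $2f+1$ view-change messages in any valid new-view message, so that an honest replica carries the notarization forward and the block is redone at the same serial number. You merely make explicit the induction over intermediate views and the low-watermark bookkeeping that the paper compresses into ``Similarly, we deduce\dots''.
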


\begin{proof}
	We denote the two BFTblocks from the two honest replicas' logs as $B$ and $B'$, respectively. Due to Lemma \ref{lem:uniqueness}, if $B$ and $B'$ are added into the logs in the same view, then it holds that $B=B'$. We now consider the case where the two BFTblocks are added into the logs in two different views $v$ and $v'$, respectively. W.l.o.g., we assume $v<v'$.
	
	Following Algorithm \ref{alg:normal-case} (line 35), $B$ (resp. $B'$) is added into the log in view $v$ (resp. $v'$) only if it gets confirmed. 
	Hence, there must be $2f+1$ replicas knows that $B$ (resp. $B'$) is notarized. 
	During a view-change, a valid new-view message for advancing to the next view should contain $2f+1$ view-change messages from different replicas. Since there are at most $f$ Byzantine replicas, at least one honest replica $i$ satisfies (1) $B$ is notarized at $i$, and (2) the new-view message contains the view-change message sent by $i$. Hence, the protocol will redo BFTblock $B$ in the new view with the same serial number $sn$. Similarly, we deduce that the BFTblock $B$ added in the log in view $v$ is the same with the BFTBlock $B'$ added in the log in view $v'$. 
\end{proof}

\begin{theorem}[Restatement of Theorem 1]
    If two different requests exist on the same position of two output logs, then the two logs cannot be both from honest replicas.
	\label{theo:safety2}
\end{theorem}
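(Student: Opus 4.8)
The plan is to derive this from Lemma~\ref{lem:safety} together with the fact that, given the confirmed BFTblock occupying a log position, the request sitting at that position is a deterministic function of the BFTblock alone. First I would pin down the bookkeeping of ``position''. An honest replica's output log is obtained by listing confirmed BFTblocks in increasing serial-number order (gaps being filled by the empty dummy BFTblocks introduced at view-change), and then, inside each BFTblock $B$, listing the requests linked by $B$ in the protocol's fixed order (alphabetical order of requests, over the union of the request sets of all datablocks whose hashes appear in $B.ct$). Thus a position $p$ determines a serial number $sn$ and an offset $r$ within the sorted request list of the BFTblock with that serial number; crucially, this map from $p$ to $(sn,r)$ is the \emph{same} for every honest replica, because all honest replicas agree on which serial numbers precede $p$ (again by Lemma~\ref{lem:safety}, together with the convention that absent slots are dummy blocks).

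Now suppose toward a contradiction that two honest replicas hold output logs with distinct requests $req\neq req'$ at some common position $p$, and let $(sn,r)$ be the corresponding index. Since $p$ is occupied on both sides, each replica has a confirmed BFTblock at serial number $sn$; by Lemma~\ref{lem:safety} these two BFTblocks are identical, so in particular they carry the same content field $ct=(\mu_1,\dots,\mu_\ell)$. Each replica obtained the requests of this block from datablocks $m_1,\dots,m_\ell$ (resp. $m_1',\dots,m_\ell'$) with $\mathsf{H}(m_t)=\mathsf{H}(m_t')=\mu_t$ for all $t$. If $m_t\neq m_t'$ for some $t$, then $(m_t,m_t')$ is a collision for $\mathsf{H}$, contradicting collision resistance; formally one turns the adversary that produces the two divergent honest logs into one that outputs such a pair. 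Hence $m_t=m_t'$ for every $t$, so both replicas sort exactly the same set of linked requests, and therefore the $r$-th request is the same on both sides: $req=req'$, a contradiction.

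I expect the cryptographic core (collision resistance plus the fixed intra-block ordering) to be routine; the delicate part is the position-accounting step, i.e.\ justifying that ``position $p$'' refers to the same $(sn,r)$ for two honest replicas that may have executed and garbage-collected different prefixes of their logs and that may have learned different numbers of confirmed BFTblocks. This requires carefully combining Lemma~\ref{lem:safety} with the garbage-collection/checkpoint rules (executed-and-removed requests still logically occupy their slots) and with the view-change convention that empty serial numbers are filled by dummy BFTblocks, so that the indexing is unambiguous and replica-independent. Once that is in place, the theorem is an immediate corollary of Lemma~\ref{lem:safety}.
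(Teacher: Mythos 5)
Your proposal is correct and follows essentially the same route as the paper: reduce the statement to Lemma~\ref{lem:safety} via the observation that a confirmed BFTblock deterministically fixes the ordered list of requests at its log positions. The paper's own proof is just this reduction stated in two sentences; your additional collision-resistance step (identical hashes in $ct$ pin down identical datablocks) and the position-accounting discussion make explicit details the paper leaves implicit, but do not change the argument.
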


\begin{proof}
	In Leopard, each BFTblock contains links to datablocks and each datablock contains a bunch of requests. If a BFTblock is added into a log of an honest replica, the linked request will be ordered in alphabetical order in the output log. Hence, Theorem \ref{theo:safety2} follows directly from Lemma \ref{lem:safety}.
\end{proof}
	
Due to the FLP impossibility \cite{FLP83}, the liveness of Leopard relies on the synchronous assumption, where synchrony is achieved after GST in the standard partially synchronous network model. The liveness of the protocol is captured in Theorem \ref{theo:liveness2}.

\begin{theorem}[Restatement of Theorem 2]
	After GST, the confirmation for a pending request will always be reached. 
	\label{theo:liveness2}
\end{theorem}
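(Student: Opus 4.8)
The plan is to split the argument along the two cases that already appear in the proof sketch: the case of an honest leader (after GST), and the case of a faulty leader, where liveness is inherited from the PBFT-style view-change. I would handle the honest-leader case in full detail, since this is where Leopard's new data-decoupling machinery must be shown not to break liveness, and then reduce the faulty-leader case to the known PBFT analysis.

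For the honest-leader case, first I would argue that every pending request eventually reaches the honest leader inside some datablock. A client that gets no acknowledgement within its timer re-submits to a fresh responsible replica, and after at most $f$ such changes it hits an honest replica; that honest replica, by Algorithm~\ref{alg:datablock}, places the request in a datablock and multicasts it to everyone, including the leader. Second, I would show the leader actually links that datablock: once the leader holds a datablock it waits for $2f+1$ \texttt{Ready} messages (Algorithm~\ref{alg:retrieve2}), which it gets because the $2f+1$ honest replicas all accept the datablock and send \texttt{Ready}; after GST these messages arrive within $\Delta$. So the datablock moves to the leader's readyblockPool and is linked by some BFTblock $m$ with a fresh serial number. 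Third — the crux — I would show that every honest replica can complete \textsc{VrfBFTblock} on $m$, i.e. can obtain every linked datablock. An honest replica that is missing a linked datablock multicasts a \texttt{Query}; since the datablock was in the leader's readyblockPool, at least $2f+1$ replicas, hence at least $f+1$ honest ones, hold it, and each responds with an erasure-coded chunk plus Merkle proof. Thus the querying replica collects $\ge f+1$ valid chunks under a common root and decodes the datablock. Here I would be careful about the adversary injecting chunks under a bogus root: the honest responders all use the same root computed deterministically from the unique datablock, so the $f+1$ honest responses share a root, and that suffices for decoding regardless of what faulty replicas send. Fourth, with all linked datablocks in hand, each of the $2f+1$ honest replicas casts its first-round vote on $m$; the leader collects $2f+1$ shares, forms the notarization proof, multicasts it, honest replicas then cast second-round votes, the leader forms the confirmation proof, and $m$ — hence every request it links — enters every honest replica's log. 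The watermark/parallelism bound $k$ together with the garbage-collection mechanism of Appendix~\ref{appendix:detailprotocol} guarantees the serial-number space does not stall, so new BFTblocks keep being proposed.

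For the faulty-leader case, I would argue that lack of progress is detected: the client's re-submitted request carries a time-out tag, is disseminated in datablocks, and every honest replica starts a timer; if no acknowledgement issues before the timer expires, honest replicas multicast \texttt{timeout} messages, $f+1$ of these force all honest replicas into the view-change, the round-robin rule advances the view, and after at most $f+1$ view-changes an honest leader is installed — after GST, with timers set large enough relative to $\Delta$, the new view is reached and is stable. State synchronization (Lemma~\ref{lem:safety} and the new-view construction) ensures no safety violation and that outstanding notarized BFTblocks are redone, so the protocol resumes normal-case operation under an honest leader, and the previous paragraph applies. This part is essentially the PBFT liveness argument and I would cite it rather than reprove it.

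The main obstacle is the third step of the honest-leader case: proving that the retrieval mechanism always lets an honest replica recover a missing linked datablock. One has to nail down that the leader only links datablocks backed by $2f+1$ \texttt{Ready} messages, that this guarantees $\ge f+1$ honest holders, that those honest holders all produce chunks under the same Merkle root, and that a faulty subset sending garbage (wrong chunks, wrong roots, or refusing to answer) cannot prevent the querying replica from eventually seeing $f+1$ consistent chunks — using the $(f+1,n)$ decoding property and, for the pre-GST interplay, the fact that an honest replica may need to re-query but will succeed once messages flow within $\Delta$. Everything else (request delivery via client re-submission, vote collection, no stalling of serial numbers) is routine given the protocol description and the already-established safety lemmas.
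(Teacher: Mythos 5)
Your proposal is correct and follows essentially the same route as the paper's proof: the same split into honest-leader and faulty-leader cases, the same client-resubmission argument for request delivery, the same use of the $2f+1$ \texttt{Ready} messages to guarantee $f+1$ honest holders who can answer retrieval queries with erasure-coded chunks, and the same reduction of the faulty-leader case to the PBFT view-change argument. Your version is in fact somewhat more careful than the paper's (e.g., the Merkle-root consistency of honest responses and the non-stalling of the serial-number window), but these are elaborations of the same argument rather than a different approach.
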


\begin{proof}
	In the case that the leader is honest, it will link every outstanding datablock in its generated BFTblocks, and initiate new agreement instances to confirm these BFTblocks. To ensure every request will eventually be confirmed, it implies that every request should be received by the leader and $f$ other honest replicas through some datablock, otherwise, the client will resend the request up to $f$ replicas, in which at least one honest replica will multicast the request in its datablock to all the replicas. If a datablock $m$ is linked by a BFTblock, according to our datablock retrieval mechanism (see Algorithm \ref{alg:retrieve2}), there must be at least $2f+1$ ready messages for $m$, and it contains at least $f+1$ honest replicas who have obtained $m$. 
	Hence, there must be enough (i.e., $f+1$) responses to help an honest replica who missed $m$ to recover $m$. Since there are at least $2f+1$ honest replicas, the confirmations of the honest leader's BFTblocks will be reached. 
	
	In the case that the current leader is faulty, who doesn't respond queryies and the confirmation of requests stops, the timer set by each replica will finally expires which will trigger a view-change to advance the protocol to the next view with a new leader. Since there are $f$ Byzantine replicas in total, the new leader will receive $2f+1$ new-view messages to generate a valid view-change message. Hence, a view-change can be completed. Also, after at most $f$ view-changes, the current leader must be an honest replica, and the liveness of the protocol is guaranteed. 
\end{proof}

\end{document}